\tikzstyle{hyb}=[rectangle,fill=green!50,draw,minimum size=3mm]
\tikzstyle{tre}=[circle,fill=green!50,draw,minimum size=3.5mm]
\renewcommand{\leq}{\leqslant}
\renewcommand{\geq}{\geqslant}
\newcommand{\NN}{\mathbb{N}}
\newcommand{\TT}{\mathcal{T}}
\newcommand{\RR}{\mathbb{R}}
\newcommand{\oF}{\overline\Phi}
\newcommand{\Cov}{\textit{cov}}
\newcommand{\mmod}{\hspace*{-1.5ex}\mod}
\newcommand{\ee}{\varepsilon}
\newcommand{\nosum}{{\cal S}}
\begin{document}

\title{Exact formulas for the variance of several balance indices under the Yule model}

\author{Gabriel Cardona\and Arnau Mir\and Francesc Rossell\'o}

\institute{Dept. of Mathematics and Computer Science, University of the Balearic Islands, E-07122 Palma. \email{\{gabriel.cardona,arnau.mir,cesc.rossello\}@uib.es}%
}
\maketitle

\begin{abstract}
One of the main applications of balance indices is in tests of null models of evolutionary processes. 
The knowledge of an exact formula for a statistic of a balance index, holding  for any number $n$ of leaves,  is necessary in order to use this statistic in tests of this kind involving trees of any size. In this paper we obtain exact formulas for the variance under the Yule model of the Sackin, the Colless and the total cophenetic indices of binary rooted phylogenetic trees with $n$ leaves. 
\end{abstract}

\section{Introduction}
\label{sec:intro}

One of the most thoroughly studied properties of the topology of phylogenetic trees is their symmetry, that is, the degree to which both children of each internal node tend to have the same number of descendant taxa.  The symmetry of a tree is usually quantified by means of \emph{balance  indices}. Many such indices have been proposed so far in the literature  \cite[Chap. 33]{fel:04}. The most popular are \emph{Colless' index} $C$  \cite{Colless:72}, which is defined as the sum, over all internal nodes $v$,  of the absolute value of the difference between the number of descendant leaves of $v$'s children, and \emph{Sackin's index} $S$  \cite{Sackin:72}, which is defined as the sum of the depths of all leaves in the tree.
 We have recently proposed an extension of Sackin's index, the \emph{total cophenetic index} $\Phi$ \cite{MRR}: the sum, over all pairs of different leaves of the tree, of the depth of their least common ancestor. The main advantages of $\Phi$ over $S$ are that it has a larger range of values and a smaller probability of ties. Moreover, $\Phi$ retains other good properties of $S$: it makes sense for not necessarily fully resolved phylogenetic trees (unlike  Colless' index), it can be computed in linear time, and the statistical properties of its distribution of values can be studied under different stochastic models of evolution, like for instance the Yule \cite{Harding71,Yule} and the uniform \cite{CS,Rosen78,cherries} models. This last property is relevant because one of the main applications of balance indices is their use as tools to test stochastic models  of evolution \cite{Mooers97,Shao:90}. 

Exact formulas for the expected values under the Yule model of $C$, $S$, and $\Phi$ on the space $\TT_n$ of fully resolved rooted phylogenetic trees with $n$ leaves have already been  published. More specifically, if we denote by $H_n$ the $n$-th \emph{harmonic number},  
$$
H_n=\sum_{i=1}^n \frac{1}{i},
$$
these expected values are, respectively,
$$
\begin{array}{ll}
E_Y(C_n)=(n\hspace*{-1ex} \mod 2)+n(H_{\lfloor \frac{n}{2}\rfloor}-1)
 & \mbox{  \cite{Heard92}}\\
E_Y(S_n)=2n(H_n-1) & \mbox{   \cite{KiSl:93}}\\
 E_Y(\Phi_n)=n(n-1)-2n(H_n-1)& \mbox{  \cite{MRR}}
\end{array}
$$

As we have already pointed out in \cite{MRR}, the last two formulas imply that the expected value under the Yule model of the sum $\oF=S+\Phi$ on $\TT_n$ is 
$$
E_Y(\oF_n)=n(n-1),
$$
a quite simpler expression than those for $E_Y(S_n)$ or $E_Y(\Phi_n)$.
This index $\oF$ has the same good properties of $\Phi$, but the formulas for its statistics under the Yule model tend to be simpler than the corresponding formulas for other indices. We shall find here another example of this fact: the variance.

The goal of this paper is to provide exact formulas for the variance of $S$, $C$, $\Phi$ and $\oF$  on $\TT_n$ under the Yule model. As a byproduct of our computations, we shall also obtain the covariances of $S$ with $\Phi$ and $\oF$. The variances of $S$ and $C$ on $\TT_n$ under this model were known so far only  for their limit distribution when $n\to \infty$ \cite{BFJ:06}:
$$
\begin{array}{l}
\sigma_Y^2(C_n) 
\sim \Big(3-\dfrac{\pi^2}{6}-\log(2)\Big)n^2\\[2ex]
\sigma_Y^2(S_n) 
\sim \Big(7-\dfrac{2\pi^2}{3}\Big)n^2
\end{array}
$$
Also, Rogers \cite{Rogers:94,Rogers:96} found  recursive formulas for the moment-generating functions of $C$ and $S$ under this model, which allow one to compute recursively as many values of $\sigma_Y^2(C_n)$ and $\sigma_Y^2(S_n)$ as desired, but he did not produce explicit exact formulas for them.

In this paper we obtain the following exact formulas for these variances:
$$
\begin{array}{l}
\sigma_Y^2(C_n^2)  \displaystyle  =\frac{5n^2+7n}{2}+(6n+1)\Big\lfloor\frac{n}{2}\Big\rfloor
-4\Big\lfloor\frac{n}{2}\Big\rfloor^2+8\Big\lfloor\frac{n+2}{4}\Big\rfloor^2\\
\qquad\displaystyle -8(n+1)\Big\lfloor\frac{n+2}{4}\Big\rfloor
-6nH_n+\Big(2\Big \lfloor\frac{n}{2}\Big\rfloor - n(n-3)\Big)H_{\lfloor\frac{n}{2}\rfloor}\\
\qquad\displaystyle -n^2H_{\lfloor\frac{n}{2}\rfloor}^{(2)}+\Big(n^2+3n-2\Big\lfloor\frac{n}{2}\Big\rfloor\Big)H_{\lfloor\frac{n+2}{4}\rfloor}-2nH_{\lfloor\frac{n}{4}\rfloor}
\\[2ex]
\displaystyle \sigma^2_Y(S_n)=7n^2-4n^2H_n^{(2)}-2nH_n-n\\[1ex]
\displaystyle\sigma^2_Y(\Phi_n)= \frac{1}{12}(n^4-10n^3+131n^2-2n)-6nH_n-4nH_n^2-4n(n-1)H_n^{(2)}\\[2ex]
\displaystyle\sigma^2_Y(\oF_n)=2\binom{n}{4}
\end{array}
$$
where $H_n^{(2)}=\sum\limits_{i=1}^n 1/i^2$.
We also obtain the following exact formulas for the covariances, under the Yule model, of $S$ with $\Phi$ and $\oF$ on $\TT_n$:
$$
\begin{array}{l}
\displaystyle\Cov_Y(S_n, \Phi_n)=  4n(nH_n^{(2)}+H_n)+\frac{1}{6} n (n^2-51 n+2)\\[2ex]
\displaystyle \Cov_Y(S_n, \oF_n)=  2nH_n+\frac{1}{6} n (n^2-9 n-4)
\end{array}
$$
All these formulas are valid for any number $n$ of leaves, and therefore they can be used in a meaningful way in tests involving trees of any size. The proofs  consist mainly of elementary, although long and technically involved, algebraic computations.

The rest of this paper is organized as follows. In a first section on Preliminaries we gather some notations and conventions on phylogenetic trees and some lemmas on probabilities of trees under the Yule model and on harmonic numbers. In the next section, we establish a recursive formula for the expected value under the Yule model of the square of a balance index satisfying a certain kind or recursion 
(a \emph{recursive shape index}   \cite{Matsen}) that lies at the basis of all our computations. Then, we devote a series of sections to compute the variances of $S$, $\Phi$, $\oF$, $C$ and the covariance of $S$ with  $\Phi$ and $\oF$, respectively. These sections consist of  long and tedious algebraic computations, without any interest beyond the fact that they prove the formulas announced above. We close the paper with a section on Conclusions and Discussion.

\section{Preliminaries}
\label{sec:prel}

\subsection{Phylogenetic trees}
\label{subsec:phtrees}
In this paper, by a  \emph{phylogenetic tree} on a set $S$ of taxa we mean a binary rooted tree  with its leaves bijectively labeled in  the set $S$. We shall always understand a phylogenetic tree as a directed graph, with its arcs pointing away from the root. To simplify the language, we shall always identify a leaf of a phylogenetic tree with its label.  We shall also use the term \emph{phylogenetic tree with $n$ leaves} to refer to a phylogenetic tree on the set $\{1,\ldots,n\}$.  We shall denote by  $\TT(S)$ the set of isomorphism classes of phylogenetic trees on a set $S$ of taxa, and by $\TT_n$ the set  $\TT(\{1,\ldots,n\})$ of isomorphism classes of phylogenetic trees with $n$ leaves.  We shall denote by $V_{int}(T)$ the set of internal nodes of a phylogenetic tree $T$.

Whenever there exists a path from $u$ to $v$ in a phylogenetic tree $T$, we shall say that $v$ is a  \emph{descendant} of $u$ and that $u$ is an \emph{ancestor} of $v$.   The \emph{lowest common ancestor} $\textrm{LCA}_T(u,v)$ of a pair of nodes $u,v$ in a phylogenetic tree $T$  is the unique common ancestor of them that is a descendant of every other common ancestor of them.  

The \emph{depth} $\delta_T(v)$ of a node $v$  in $T$ is the length (in number of arcs) of the unique path from the root $r$ of $T$ to $v$.
The \emph{cophenetic value} $\varphi_T(i,j)$ of a pair of leaves $i,j$ is the depth of their lowest common ancestor  \cite{Sokal:62}:
$$
\varphi_T(i,j)=\delta_T(\textrm{LCA}_T(i,j)).
$$ 
To simplify the notations at some points, we shall also write $\varphi_T(i,i)$ to denote the depth $\delta_T(i)$ of a leaf $i$. 

Given two phylogenetic trees $T,T'$ on disjoint sets of taxa $S,S'$, respectively, their \emph{tree-sum} is the tree $T\,\widehat{\ }\, T'$on $S\cup S'$ obtained by connecting the roots of $T$ and $T'$ to a (new) common root.
Every tree with $n$ leaves is obtained as $T_k\widehat{\ }\, {}T'_{n-k}$, for some $1\leq k\leq n-1$, some subset  $S_k\subseteq \{1,\ldots,n\}$ with $k$ elements, some tree $T_k$ on $S_k$ and some tree $T'_{n-k}$ on $S_k^c=\{1,\ldots,n\}\setminus S_k$; actually, every tree $T$ with $n$ leaves is obtained in this way \emph{twice}.

The \emph{Yule}, or  \emph{Equal-Rate Markov}, model of evolution \cite{Harding71,Yule} is a stochastic model of phylogenetic trees' growth. It starts with a node, and at every step a leaf is chosen randomly and uniformly and it is splitted into two leaves. Finally, the labels are assigned randomly and uniformly to the leaves once the desired number of leaves is reached. This corresponds to a model  of evolution where, at each step, each currently extant species can give rise,  with the same probability, to two new species. Under this model of evolution, different trees with the same number of leaves may have different probabilities. More specifically, if $T$ is a phylogenetic tree with $n$ leaves, and for every internal node $v$ we denote by $\kappa_T(v)$ the number of its descendant leaves, then the probability of $T$ under the Yule model is   \cite{Brown,SM01}
 \begin{equation}
 P_Y(T)=\frac{2^{n-1}}{n!}\prod_{v\in V_{int}(T)}\frac{1}{\kappa_T(v)-1}
\label{probY}
\end{equation}
The following easy lemma on the probability of a tree-sum under the Yule model will be used in our computations.

\begin{lemma}\label{lem:pangle}
Let $\emptyset\neq S_k\subsetneq \{1,\ldots,n\}$ with $|S_k|=k$, let  $T_k\in \TT(S_k)$ and $T'_{n-k}\in\TT( \{1,\ldots,n\}\setminus S_k)$. Then
$$
P_Y(T_k\widehat{\ }\, {}T'_{n-k})=\dfrac{2}{(n-1)\binom{n}{k}} P_Y(T_{k})P_Y(T'_{n-k})
$$
\end{lemma}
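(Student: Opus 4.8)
The plan is to compute $P_Y(T_k\,\widehat{\ }\,T'_{n-k})$ directly from the explicit product formula~\eqref{probY} and to compare it with the product $P_Y(T_k)P_Y(T'_{n-k})$ of the two smaller probabilities. The key observation is that the internal nodes of the tree-sum $T_k\,\widehat{\ }\,T'_{n-k}$ are exactly the root together with the internal nodes of $T_k$ and those of $T'_{n-k}$, so the product over $V_{int}$ factors cleanly once we keep track of the number of descendant leaves $\kappa$ of each node.

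First I would note that for every internal node $v$ of $T_k$ (respectively $T'_{n-k}$), its set of descendant leaves inside the tree-sum is the same as inside $T_k$ (respectively $T'_{n-k}$), since attaching a new root above does not change which leaves lie below $v$; hence $\kappa_{T_k\widehat{\ }T'_{n-k}}(v)=\kappa_{T_k}(v)$ for those nodes, and analogously for $T'_{n-k}$. The only genuinely new internal node is the new common root, whose number of descendant leaves is $n$, contributing a factor $1/(n-1)$ to the product. I would therefore write
$$
\prod_{v\in V_{int}(T_k\widehat{\ }T'_{n-k})}\frac{1}{\kappa(v)-1}
=\frac{1}{n-1}\prod_{v\in V_{int}(T_k)}\frac{1}{\kappa_{T_k}(v)-1}\prod_{v\in V_{int}(T'_{n-k})}\frac{1}{\kappa_{T'_{n-k}}(v)-1}.
$$

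Next I would assemble the prefactors. The tree-sum has $n$ leaves, so $P_Y$ carries a prefactor $2^{n-1}/n!$, while $P_Y(T_k)$ and $P_Y(T'_{n-k})$ carry prefactors $2^{k-1}/k!$ and $2^{n-k-1}/(n-k)!$ respectively. Multiplying the two smaller prefactors gives $2^{n-2}/(k!(n-k)!)$, and the ratio of the tree-sum prefactor to this product is
$$
\frac{2^{n-1}/n!}{2^{n-2}/(k!(n-k)!)}=\frac{2\,k!(n-k)!}{n!}=\frac{2}{\binom{n}{k}}.
$$
Combining this with the extra factor $1/(n-1)$ coming from the new root in the displayed product yields exactly $P_Y(T_k\widehat{\ }T'_{n-k})=\frac{2}{(n-1)\binom{n}{k}}P_Y(T_k)P_Y(T'_{n-k})$, as claimed.

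This argument is entirely a matter of bookkeeping, so there is no real obstacle beyond being careful that the new root is the sole additional internal node and that its $\kappa$-value is $n$. The one point worth stating explicitly is that the leaf sets of $T_k$ and $T'_{n-k}$ are disjoint and their union is $\{1,\ldots,n\}$, which guarantees that the tree-sum indeed has $n$ leaves and that no internal node is counted twice; all the factor $2$'s and factorials then fall into place mechanically.
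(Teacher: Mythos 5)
Your proof is correct and follows exactly the same route as the paper's: applying the explicit product formula for $P_Y$ to the three trees and using that $V_{int}(T_k\,\widehat{\ }\,T'_{n-k})$ is the disjoint union of $V_{int}(T_k)$, $V_{int}(T'_{n-k})$ and the new root (with $\kappa$-value $n$, contributing the factor $1/(n-1)$). The paper states this in one sentence; you have simply carried out the same bookkeeping explicitly, and the factorial and power-of-two arithmetic checks out.
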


\begin{proof}
This equality is a direct consequence of applying equation (\ref{probY}) to compute $P_Y(T_k)$, $P_Y(T'_{n-k})$ and $P_Y(T_k\widehat{\ }\, {}T'_{n-k})$, using the fact that $V_{int}(T_k\widehat{\ }\, {}T'_{n-k})$ is the disjoint union of $V_{int}(T_{k})$, $V_{int}(T'_{n-k})$ and the root $r$ of $T_k\widehat{\ }\, {}T'_{n-k}$. \hspace*{\fill}\qed
\end{proof}

\subsection{Harmonic numbers}
\label{subsec:hn}

For every $n\geq 1$, let
$$
H_n=\sum_{i=1}^n \frac{1}{i},\quad H_n^{(2)}=\sum_{i=1}^n \frac{1}{i^2}.
$$
Let, moreover, $H_0=H_0^{(2)}=0$. $H_n$ is called the $n$-th \emph{harmonic number}, and $H_n^{(2)}$, the \emph{generalized harmonic number of power $2$}.
It is known (see, for instance,  \cite[p. 264]{Knuth2}) that
$$
\begin{array}{l}
\displaystyle H_n=\ln(n)+\gamma+\frac{1}{2n}-\frac{1}{12n^2}+O\Big(\frac{1}{n^3}\Big)\\
\displaystyle  H_n^{(2)}=\frac{\pi^2}{6}-\frac{1}{n}+\frac{1}{2n^2}+O\Big(\frac{1}{n^3}\Big)
\end{array}
$$
where $\gamma$ is Euler's constant.

The following identities will be used in the proofs of our main results, usually without any further notice.

\begin{lemma}\label{lem:sumharm}
For every $n\geq 2$:
\begin{enumerate}[(1)]
\item $\displaystyle \sum_{k=1}^{n-1} H_k=n(H_{n}-1)$

\item $\displaystyle \sum_{k=1}^{n-1} kH_k=\frac{1}{4}n(n-1)(2H_{n}-1)$

\item $\displaystyle \sum_{k=1}^{n-1} k^2H_k=\frac{1}{36}n(n-1)((12n-6)H_n-4n-1)$

\item $\displaystyle \sum_{k=1}^{n-1} \frac{H_k}{k+1}=\frac{1}{2}(H_n^2-H_n^{(2)})$

\item $\displaystyle  \sum_{k=1}^{n-1} H_k^2=nH_{n}^2-(2n+1)H_{n}+2n$

\item $\displaystyle  \sum_{k=1}^{n-1} H_k^{(2)}=nH_n^{(2)}-H_n$

\item $\displaystyle \sum_{k=1}^{n-1} H_kH_{n-k}=(n+1)(H_{n+1}^2-H_{n+1}^{(2)}-2H_{n+1}+2)$

\item $\displaystyle \sum_{k=1}^{n-1} kH_kH_{n-k}=\binom{n+1}{2}(H_{n+1}^2-H_{n+1}^{(2)}-2H_{n+1}+2)$

\item $ \displaystyle \sum_{k=1}^{n-1}kH_{\lfloor k/2\rfloor}=
\frac{1}{2}n (n-1) H_{\lfloor n/2\rfloor }-\Big\lfloor \frac{n}{2} \Big\rfloor^2$

\item $ \displaystyle \sum_{k=1}^{n-1} \frac{H_k}{2k-1}= \sum_{k=1}^{n-1} \frac{H_k}{2k+1} -\left( \frac{4n-1}{2n-1}\right) H_n +2 H_{2n}$
\end{enumerate}
\end{lemma}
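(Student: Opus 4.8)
The plan is to prove each of the ten identities by the most economical elementary technique, exploiting the fact that the harder ones reduce to the easier ones and to standard summation facts.

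The four identities that are a polynomial in $k$ times $H_k$ or $H_k^{(2)}$, namely (1), (2), (3) and (6), I would attack by \emph{interchanging the order of summation}. Writing $H_k=\sum_{i=1}^k 1/i$ (respectively $H_k^{(2)}=\sum_{i=1}^k 1/i^2$) and swapping the two sums turns $\sum_{k=1}^{n-1}k^{j}H_k$ into $\sum_{i=1}^{n-1}\frac{1}{i}\sum_{k=i}^{n-1}k^{j}$, so that each identity collapses to the closed forms for the power sums $\sum k^{j}$ with $j=0,1,2$ together with the relation $H_{n-1}=H_n-1/n$; these steps are purely mechanical. Identity (6) is identical to (1) with $1/i$ replaced by $1/i^2$. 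Identity (4) I would instead obtain by \emph{telescoping}: from $H_{k+1}^2-H_k^2=\frac{2H_k}{k+1}+\frac{1}{(k+1)^2}$ one gets $2\sum_{k=1}^{n-1}\frac{H_k}{k+1}=(H_n^2-H_1^2)-(H_n^{(2)}-1)$, which is the claim. Identity (5) then follows from (1) by \emph{summation by parts}: using $H_k^2-H_{k-1}^2=\frac{2H_k}{k}-\frac{1}{k^2}$, the sum $\sum_{k=1}^{n-1}H_k^2$ reduces to the already known $\sum_{k=1}^{n-1}H_k$ and $\sum_{k=1}^{n-1}1/k$.

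For the convolutions I would switch to \emph{generating functions}. Since $\sum_{k\geq0}H_k x^k=-\ln(1-x)/(1-x)$, the sum $\sum_{k=1}^{n-1}H_kH_{n-k}$ in (7) is the coefficient of $x^n$ in $\bigl(-\ln(1-x)/(1-x)\bigr)^2$. Expanding $\ln^2(1-x)=2\sum_{m\geq1}\frac{H_{m-1}}{m}x^m$ and multiplying by $(1-x)^{-2}=\sum_{j\geq0}(j+1)x^j$ expresses the convolution as $2(n+1)\sum_{m=1}^{n}\frac{H_{m-1}}{m}-2\sum_{m=1}^{n}H_{m-1}$, i.e. in terms of (4) and (1); a short rewrite using $H_{n+1}=H_n+\frac{1}{n+1}$ recovers the stated closed form. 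Identity (8) is then immediate from (7) by \emph{symmetry}: replacing $k$ by $n-k$ in $\sum_{k=1}^{n-1}kH_kH_{n-k}$ and averaging gives $\sum_{k=1}^{n-1}kH_kH_{n-k}=\frac{n}{2}\sum_{k=1}^{n-1}H_kH_{n-k}$, and $\frac{n}{2}(n+1)=\binom{n+1}{2}$.

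The last two require more care. For (9) I would again use summation by parts, after noting that $H_{\lfloor k/2\rfloor}-H_{\lfloor(k-1)/2\rfloor}$ equals $2/k$ when $k$ is even and $0$ when $k$ is odd; the boundary term yields $\frac12 n(n-1)H_{\lfloor n/2\rfloor}$ (after converting $H_{\lfloor(n-1)/2\rfloor}$ to $H_{\lfloor n/2\rfloor}$ in the even case), while the remaining sum collapses to a sum of consecutive odd integers, whose value is the perfect square $\lfloor n/2\rfloor^2$. This is where I expect the \emph{main obstacle}: the floor forces a split into the cases $n$ even and $n$ odd and careful bookkeeping of the top term. Finally (10) is a relation between two sums rather than a closed form, so I would reindex the second sum by $k\mapsto k-1$ so that both run over $1/(2k-1)$, subtract, and telescope using $H_k-H_{k-1}=1/k$; the partial fraction $\frac{1}{k(2k-1)}=\frac{2}{2k-1}-\frac1k$ together with the classical $\sum_{k=1}^{m}\frac{1}{2k-1}=H_{2m}-\frac12 H_m$ then produces exactly $-\frac{4n-1}{2n-1}H_n+2H_{2n}$.
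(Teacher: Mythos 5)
Your proofs are correct, but your route is substantially more self-contained than the paper's, which only proves (9) and (10) itself and outsources the rest: it cites Knuth's textbooks for (1)--(6) (``well known and easily proved by induction'') and the Wei--Gong--Wang paper for the convolutions (7) and (8). Your interchange-of-summation argument for (1), (2), (3), (6), the telescoping of $H_{k+1}^2-H_k^2=\frac{2H_k}{k+1}+\frac{1}{(k+1)^2}$ for (4), and the summation by parts for (5) replace those citations with short computations; for (5), note that pairing $k$ with the difference, i.e.\ telescoping $\sum_{k}k\,(H_k^2-H_{k-1}^2)=\sum_k\big(2H_k-\frac1k\big)$, is what makes the reduction land on $\sum H_k$ and $H_{n-1}$ alone, exactly as you claim. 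For (7), extracting the coefficient of $x^n$ in $\ln^2(1-x)/(1-x)^2$ via $\ln^2(1-x)=2\sum_{m\geq1}\frac{H_{m-1}}{m}x^m$ and feeding the result back through your own (4) and (1) is a genuine alternative to the cited reference, and deriving (8) from (7) by the substitution $k\mapsto n-k$ and averaging is slicker than invoking the reference a second time. For (9), the paper splits into the cases $n$ even and $n$ odd at the outset and reduces each parity separately to (1) and (2), whereas your Abel summation against the difference $H_{\lfloor k/2\rfloor}-H_{\lfloor(k-1)/2\rfloor}$ (equal to $2/k$ for even $k$, $0$ for odd $k$) confines the case analysis to the boundary term; and indeed, as you anticipate, converting $H_{\lfloor(n-1)/2\rfloor}$ to $H_{\lfloor n/2\rfloor}$ in the even case contributes exactly the next odd number $n-1$, extending $\lfloor(n-1)/2\rfloor^2$ to $\lfloor n/2\rfloor^2$. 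Only for (10) do you and the paper coincide: the same reindexing $k\mapsto k-1$, the same use of $H_k-H_{k-1}=1/k$, the same partial fraction $\frac{1}{k(2k-1)}=\frac{2}{2k-1}-\frac1k$, and the same odd-harmonic identity $\sum_{k=1}^{m}\frac{1}{2k-1}=H_{2m}-\frac12 H_m$. What the paper's approach buys is brevity; what yours buys is a reference-free, fully verifiable proof of all ten identities.
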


\begin{proof}
Identities (1)--(6) are well known and easily proved by induction on $n$: see, for instance, the chapters on harmonic numbers in Knuth's classical textbooks  \cite[\S 6.3, 6.4]{Knuth2} and  \cite[\S 1.2.7]{Knuth1}. Identities (7) and (8) are proved in \cite[Thms. 1,2]{WGW}. We shall prove (9) and (10) here. 

As far as (9) goes, if $n$ is even, 
$$
\begin{array}{l}
\displaystyle \sum_{k=1}^{n-1}kH_{\lfloor k/2\rfloor}  = 
\sum_{j=1}^{(n-2)/2} 2jH_j+\sum_{j=0}^{(n-2)/2}(2j+1)H_{j}\\ \displaystyle \quad =
4\sum_{j=1}^{n/2-1} jH_j+\sum_{j=1}^{n/2-1} H_{j}=\frac{1}{2} n(n-1) H_{\frac{n}{2}}-\Big(\frac{n}{2}\Big)^2
\end{array}
$$
while if $n$ is odd, 
$$
\begin{array}{l}
\displaystyle\sum_{j=1}^{n-1}kH_{\lfloor k/2\rfloor}  = 
\sum_{j=1}^{(n-1)/2}  \hspace*{-2ex}2jH_j+ \hspace*{-2ex}\sum_{j=0}^{(n-1)/2-1} \hspace*{-2ex}(2j+1)H_{j}\\ 
\displaystyle \quad  =
4 \hspace*{-2ex}\sum_{j=1}^{(n-1)/2-1} \hspace*{-2ex} jH_j+ \hspace*{-2ex}\sum_{j=1}^{(n-1)/2-1} \hspace*{-2ex} H_{j}+(n-1)H_{\frac{n-1}{2}}
=\frac{1}{2}n (n-1) H_{\frac{n-1}{2}}-\Big(\frac{n-1}{2}\Big)^2
\end{array}
$$
Both equalities agree with identity (9). 

As far as (10) goes,
$$
\begin{array}{l}
\displaystyle \sum_{k=1}^{n-1} \frac{H_k}{2k+1} = \sum_{k=2}^n \frac{H_{k-1}}{2k-1}=\sum_{k=2}^{n} \frac{H_k-\frac{1}{k}}{2k-1} = \sum_{k=2}^n \frac{H_k}{2k-1}-\sum_{k=2}^n \frac{1}{k(2k-1)} \\ \displaystyle \qquad  =
\sum_{k=1}^{n-1} \frac{H_k}{2k-1} -1+\frac{H_n}{2n-1} +\sum_{k=2}^n \left(\frac{1}{k}-\frac{2}{2k-1}\right)  \\ \displaystyle \qquad  = 
\sum_{k=1}^{n-1} \frac{H_k}{2k-1} -1+\frac{H_n}{2n-1} + H_n -1-2\sum_{k=2}^n\frac{1}{2k-1}   \\ \displaystyle \qquad  = 
\sum_{k=1}^{n-1} \frac{H_k}{2k-1} -1 +\frac{H_n}{2n-1} +H_n -1-2(H_{2n}-\frac{1}{2} H_n)+2 \\ \displaystyle \qquad  =
\sum_{k=1}^{n-1} \frac{H_k}{2k-1}+\left( \frac{4n-1}{2n-1}\right) H_n -2 H_{2n}.
\end{array}
$$
which is equivalent to (10).\hspace*{\fill}\qed
\end{proof}

\section{Recursive shape indices}
\label{sec:rec}

A \emph{recursive shape index  for phylogenetic trees} \cite{Matsen} is a mapping $I$ that associates to each phylogenetic tree $T$ a real number $I(T)\in \mathbb{R}$ satisfying the following two conditions:
\begin{enumerate}
\item[(a)] It is  invariant under tree isomorphisms  and relabelings of leaves.

\item[(b)] There exists a symmetrical mapping $f_I:\NN\times \NN\to \RR$ such that, for every  phylogenetic trees $T,T'$ on disjoint sets of taxa $S,S'$, respectively,
$$
I(T\,\widehat{\ }\, T')=I(T)+I(T')+f_I(|S|,|S'|).
$$
\end{enumerate}

As we shall see in later sections, the balance indices considered in this paper are recursive shape indices in this sense.  The following two results extract a common part of the computation of their variances.
In them, and henceforth, $E_Y$ applied to a random variable will mean the expected value of this random variable under the Yule model.

\begin{lemma}\label{lem:YI}
Let $I$ be a recursive shape index for phylogenetic trees. 
For every $n\geq 1$, let $I_n$ be the random variable that chooses a tree $T\in \TT_n$ and computes $I(T)$. Then,
$$
\begin{array}{rl}
E_Y(I_n^2) & \displaystyle =\frac{1}{n-1}\sum_{k=1}^{n-1} \Big(2E_Y(I^2_k)  + 4f_I(k,n-k)E_Y(I_k)+2E_Y(I_k)E_Y(I_{n-k})\\
& \displaystyle \qquad\qquad \qquad\quad +f_I(k,n-k)^2\Big).
\end{array}
$$
\end{lemma}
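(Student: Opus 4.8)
The plan is to expand $E_Y(I_n^2)=\sum_{T\in\TT_n}P_Y(T)\,I(T)^2$ by grouping the trees according to the tree-sum decomposition recalled in Section~\ref{subsec:phtrees}. Recall that every $T\in\TT_n$ arises as $T_k\,\widehat{\ }\,T'_{n-k}$ exactly twice, the two decompositions corresponding to interchanging the two maximal pending subtrees. Since the leaves are labelled, these two decompositions always carry distinct label sets on the left factor (the two complementary clades are nonempty and disjoint, hence unequal), so the overcount is by a factor of exactly $2$. Thus
\begin{equation*}
E_Y(I_n^2)=\frac12\sum_{k=1}^{n-1}\ \sum_{\substack{S_k\subseteq\{1,\dots,n\}\\ |S_k|=k}}\ \sum_{\substack{T_k\in\TT(S_k)\\ T'_{n-k}\in\TT(S_k^c)}} P_Y(T_k\,\widehat{\ }\,T'_{n-k})\,I(T_k\,\widehat{\ }\,T'_{n-k})^2 .
\end{equation*}

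Next I would substitute the two defining properties. By Lemma~\ref{lem:pangle}, $P_Y(T_k\,\widehat{\ }\,T'_{n-k})=\frac{2}{(n-1)\binom{n}{k}}P_Y(T_k)P_Y(T'_{n-k})$, and by property (b), writing $f=f_I(k,n-k)$, we have $I(T_k\,\widehat{\ }\,T'_{n-k})^2=\bigl(I(T_k)+I(T'_{n-k})+f\bigr)^2$, which I expand into its six monomials. After substitution the summand factorises as a product over $T_k$ and over $T'_{n-k}$; invoking $\sum_{T_k}P_Y(T_k)=1$, $\sum_{T_k}P_Y(T_k)I(T_k)=E_Y(I_k)$ and $\sum_{T_k}P_Y(T_k)I(T_k)^2=E_Y(I_k^2)$ (and likewise for the $(n-k)$-factor), each inner double sum collapses to
\begin{equation*}
E_Y(I_k^2)+E_Y(I_{n-k}^2)+2E_Y(I_k)E_Y(I_{n-k})+2f\bigl(E_Y(I_k)+E_Y(I_{n-k})\bigr)+f^2 .
\end{equation*}
By the relabelling invariance (a) this quantity depends on $S_k$ only through its size $k$, so the sum over the $\binom{n}{k}$ choices of $S_k$ merely multiplies it by $\binom{n}{k}$; this cancels the $\binom{n}{k}$ in the denominator, and together with the factors $\frac12$ and $2$ leaves the overall prefactor $\frac1{n-1}$.

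The final step is to rewrite the result in the stated symmetric form. Substituting $j=n-k$ and using that $f_I$ is symmetric, one gets $\sum_{k=1}^{n-1}E_Y(I_{n-k}^2)=\sum_{k=1}^{n-1}E_Y(I_k^2)$ and $\sum_{k=1}^{n-1}f_I(k,n-k)E_Y(I_{n-k})=\sum_{k=1}^{n-1}f_I(k,n-k)E_Y(I_k)$, so the $E_Y(I_{n-k}^2)$ term doubles the $E_Y(I_k^2)$ term and the two $f$-linear terms combine into $4f_I(k,n-k)E_Y(I_k)$, while the cross term $2E_Y(I_k)E_Y(I_{n-k})$ and the constant $f_I(k,n-k)^2$ are kept as they stand. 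This yields exactly the claimed identity. I do not expect a genuine obstacle: the argument is essentially bookkeeping. The only points requiring care are the \emph{exact} factor of $2$ in the decomposition count, which holds precisely because labelled left and right factors are never interchangeable, and the clean cancellation of $\binom{n}{k}$, which relies on invariance (a) to pull the $S_k$-sum out as a plain multiplicity rather than a genuine sum.
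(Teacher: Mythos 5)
Your proposal is correct and follows essentially the same route as the paper's own proof: expanding $E_Y(I_n^2)$ via the twofold tree-sum decomposition, substituting Lemma~\ref{lem:pangle} and property (b), collapsing the factored sums to expectations, and symmetrizing under $k\mapsto n-k$ to merge the $E_Y(I_{n-k}^2)$ and $f$-linear terms. Your explicit justifications of the exact factor of $2$ and of the cancellation of $\binom{n}{k}$ via relabelling invariance are points the paper leaves implicit, but the argument is the same.
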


\begin{proof}
We compute  $E_Y(I_n^2)$ using its very definition and Lemma \ref{lem:pangle}. Recall that
every tree in $\TT_n$ is obtained \emph{twice} as $T_k\widehat{\ }\, {}T'_{n-k}$, for some $1\leq k\leq n-1$, some subset  $S_k\subseteq \{1,\ldots,n\}$ with $k$ elements, some tree $T_k$ on $S_k$ and some tree $T'_{n-k}$ on $S_k^c$.
$$
\begin{array}{l}
E_Y(I_n^2) \displaystyle =\sum_{T\in \TT_n} I(T)^2\cdot P_Y(T)
\\
\quad \displaystyle =\frac{1}{2}
\sum_{k=1}^{n-1}\sum_{S_k\subsetneq\{1,\ldots,n\}\atop |S_k|=k}
 \sum_{T_k\in \TT(S_k)}\sum_{T'_{n-k}\in \TT(S_k^c)}I(T_k\widehat{\ }\, {}T'_{n-k})^2\cdot P_Y(T_k\widehat{\ }\, {}T'_{n-k})
 \\ 
\quad   \displaystyle =\frac{1}{2} \sum_{k=1}^{n-1}\binom{n}{k}
 \sum_{T_k\in \TT_k}\sum_{T'_{n-k} \in \TT_{n-k}}\big(I(T_k)+I(T'_{n-k})+f_I(k,n-k)\big)^2\\
 \quad  \displaystyle  \qquad\qquad 
\cdot
 \frac{2}{(n-1)\binom{n}{k}} P_Y(T_{k})P_Y(T'_{n-k})\\
\quad  \displaystyle  =\frac{1}{n-1}\sum_{k=1}^{n-1} 
 \sum_{T_k}\sum_{T'_{n-k}}\big[I(T_k)^2+I(T'_{n-k})^2+f_I(k,n-k)^2+2I(T_k)I(T'_{n-k})\\
 \quad  \displaystyle  \qquad\qquad 
 +2f_I(k,n-k)I(T_k)+2f_I(k,n-k)I(T'_{n-k})\big]P_Y(T_{k})P_Y(T'_{n-k})\\
\quad  \displaystyle  =\frac{1}{n-1}\sum_{k=1}^{n-1} 
\Big[ \sum_{T_k}\sum_{T'_{n-k} }I(T_k)^2P_Y(T_{k})P_Y(T'_{n-k})   \\
\quad  \displaystyle\qquad +\sum_{T_k}\sum_{T'_{n-k} }I(T'_{n-k})^2P_Y(T_{k})P_Y(T'_{n-k})\\
\quad  \displaystyle\qquad +\sum_{T_k}\sum_{T'_{n-k} }f_I(k,n-k)^2P_Y(T_{k})P_Y(T'_{n-k})\\
\quad  \displaystyle\qquad +2\sum_{T_k}\sum_{T'_{n-k} }f_I(k,n-k)I(T_k)P_Y(T_{k})P_Y(T'_{n-k})\\
\quad  \displaystyle\qquad +2\sum_{T_k}\sum_{T'_{n-k} }f_I(k,n-k)I(T'_{n-k})P_Y(T_{k})P_Y(T'_{n-k})\\
\end{array}
$$
$$
\begin{array}{l}
\quad  \displaystyle\qquad +2\sum_{T_k}\sum_{T'_{n-k} }I(T_k)I(T'_{n-k})P_Y(T_{k})P_Y(T'_{n-k}) \Big]\\
\quad  \displaystyle  =\frac{1}{n-1}\sum_{k=1}^{n-1} 
\Big[ \sum_{T_k}I(T_k)^2P_Y(T_{k})  +\sum_{T'_{n-k} }I(T'_{n-k})^2P_Y(T'_{n-k})+f_I(k,n-k)^2\\
\quad  \displaystyle\qquad +2\sum_{T_k}f_I(k,n-k)I(T_k)P_Y(T_{k})+2\sum_{T'_{n-k} }f_I(k,n-k)I(T'_{n-k})P_Y(T'_{n-k})\\
\quad  \displaystyle\qquad +2\Big(\sum_{T_k}I(T_k)P_Y(T_{k})\Big)\Big(\sum_{T'_{n-k} }I(T'_{n-k})P_Y(T'_{n-k})\Big) \Big]
  \\
\quad  \displaystyle  =\frac{1}{n-1}\sum_{k=1}^{n-1} 
\Big( E_Y(I^2_k) + E_Y(I^2_{n-k})+ f_I(k,n-k)^2\\
\quad  \displaystyle\qquad  +2f_I(k,n-k)(E_Y(I_k)+ E_Y(I_{n-k}))+2E_Y(I_k) E_Y(I_{n-k}) \Big)
 \\
 \quad  \displaystyle  =\frac{1}{n-1}\sum_{k=1}^{n-1} 
\Big(2E_Y(I_k^2) + 4f_I(k,n-k)E_Y(I_k)+2E_Y(I_k) E_Y(I_{n-k})\\ \quad  \displaystyle  \qquad\qquad \qquad \qquad 
+f_I(k,n-k)^2\Big)
\end{array}
$$
as we claimed. \hspace*{\fill}\qed
\end{proof}

\begin{corollary}\label{cor:YI}
Let $I$ be a recursive shape index for phylogenetic trees and,
for every $n\geq 1$, let $I_n$ be the random variable that chooses a tree $T\in \TT_n$ and computes $I(T)$. Set
$$
\begin{array}{l}
\varepsilon_I(a,b-1)=f_I(a,b)-f_I(a,b-1)\mbox{ for every $a\geq 1$ and $b\geq 2$}\\
R_I(n-1)=E_Y(I_n)-E_Y(I_{n-1})\mbox{ for every $n\geq  2$}
\end{array}
$$
If $E_Y(I_1)=0$, then
$$
\begin{array}{l}
\displaystyle E_Y(I_n^2)= \frac{n}{n-1}  E_Y(I_{n-1}^2)
+\frac{4}{n-1} \sum_{k=1}^{n-2} \ee_I(k,n-1-k) E_Y(I_k)\\
\quad\qquad \displaystyle+\frac{4}{n-1}f_I(n-1,1)  E_Y(I_{n-1}) +\frac{2}{n-1} \sum_{k=1}^{n-2}  E_Y(I_k)R_I(n-k-1)\\
\quad\qquad \displaystyle+\frac{f_I(n-1,1)^2}{n-1}+\frac{1}{n-1}\sum_{k=1}^{n-2} (f_I(k,n-k)^2- f_I(k,n-k-1)^2).
\end{array}
$$
\end{corollary}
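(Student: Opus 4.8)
The plan is to derive the recursion from Lemma~\ref{lem:YI} by a telescoping subtraction in $n$. Writing $A_n:=(n-1)E_Y(I_n^2)$, Lemma~\ref{lem:YI} reads
$$
A_n=\sum_{k=1}^{n-1}\Big(2E_Y(I_k^2)+4f_I(k,n-k)E_Y(I_k)+2E_Y(I_k)E_Y(I_{n-k})+f_I(k,n-k)^2\Big).
$$
The obstruction to a genuine recursion is the leading sum $2\sum_{k=1}^{n-1}E_Y(I_k^2)$, which drags in every previous second moment. The key observation is that writing the same identity with $n$ replaced by $n-1$ and subtracting collapses this sum, since $2\sum_{k=1}^{n-1}E_Y(I_k^2)-2\sum_{k=1}^{n-2}E_Y(I_k^2)=2E_Y(I_{n-1}^2)$. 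This single cancellation is what turns an explicit sum into a recursion and is the heart of the argument.

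First I would form $A_n-A_{n-1}$ and treat the four sums separately. For the term $4f_I(k,n-k)E_Y(I_k)$ I would peel off the top index $k=n-1$ from the sum defining $A_n$, producing $4f_I(n-1,1)E_Y(I_{n-1})$, and pair the remaining range $1\le k\le n-2$ against the corresponding sum in $A_{n-1}$; the paired differences give $4\sum_{k=1}^{n-2}\big(f_I(k,n-k)-f_I(k,n-1-k)\big)E_Y(I_k)=4\sum_{k=1}^{n-2}\ee_I(k,n-1-k)E_Y(I_k)$ by the definition of $\ee_I$. The constant term $f_I(k,n-k)^2$ is handled identically, contributing $f_I(n-1,1)^2$ from the top index and $\sum_{k=1}^{n-2}\big(f_I(k,n-k)^2-f_I(k,n-k-1)^2\big)$ from the paired range.

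The cross term $2\sum E_Y(I_k)E_Y(I_{n-k})$ needs slightly more care. Here the top index $k=n-1$ contributes $2E_Y(I_{n-1})E_Y(I_1)$, which vanishes precisely by the hypothesis $E_Y(I_1)=0$; this is the one place where that assumption is used. On the paired range $1\le k\le n-2$ the factor $E_Y(I_k)$ is common, so the differences reduce to $2\sum_{k=1}^{n-2}E_Y(I_k)\big(E_Y(I_{n-k})-E_Y(I_{n-k-1})\big)=2\sum_{k=1}^{n-2}E_Y(I_k)R_I(n-k-1)$ by the definition of $R_I$. Collecting all four contributions gives $A_n-A_{n-1}$; finally, substituting $A_n=(n-1)E_Y(I_n^2)$ and $A_{n-1}=(n-2)E_Y(I_{n-1}^2)$ converts $A_{n-1}+2E_Y(I_{n-1}^2)$ into $nE_Y(I_{n-1}^2)$, and dividing through by $n-1$ yields exactly the claimed formula. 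I expect no step beyond the index bookkeeping to be delicate; the only real subtleties are the two top-index peelings (one relying on $E_Y(I_1)=0$) and keeping the off-by-one in the second argument of $f_I$ consistent with the definitions of $\ee_I$ and $R_I$.
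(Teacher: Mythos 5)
Your proof is correct and follows essentially the same route as the paper's: both subtract the instance of Lemma~\ref{lem:YI} for $n-1$ from the instance for $n$, peel off the $k=n-1$ term (where $E_Y(I_1)=0$ kills the cross term $2E_Y(I_{n-1})E_Y(I_1)$, exactly as in the paper), and absorb the differences via the definitions of $\ee_I$ and $R_I$. Your only deviation is cosmetic — clearing denominators with $A_n=(n-1)E_Y(I_n^2)$ before telescoping, whereas the paper carries the factors $\tfrac{n-2}{n-1}\cdot\tfrac{2}{n-2}$ etc.\ through the computation — which slightly streamlines the bookkeeping but does not change the argument.
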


\begin{proof}
By Lemma \ref{lem:YI},
$$
\begin{array}{rl}
E_Y(I_n^2)& \displaystyle =\frac{2}{n-1}\sum_{k=1}^{n-1} E_Y(I^2_k)  + \frac{4}{n-1}\sum_{k=1}^{n-1} f_I(k,n-k)E_Y(I_k)\\
& \qquad \displaystyle +\frac{2}{n-1}\sum_{k=1}^{n-1} E_Y(I_k)E_Y(I_{n-k}) +\frac{1}{n-1}\sum_{k=1}^{n-1} f_I(k,n-k)^2,
\end{array}
$$
and in particular
$$
\begin{array}{rl}
E_Y(I_{n-1}^2)& \displaystyle =\frac{2}{n-2}\sum_{k=1}^{n-2} E_Y(I^2_k)  + \frac{4}{n-2}\sum_{k=1}^{n-2} f_I(k,n-1-k)E_Y(I_k)\\
& \quad \displaystyle +\frac{2}{n-2}\sum_{k=1}^{n-2} E_Y(I_k)E_Y(I_{n-1-k}) +\frac{1}{n-2}\sum_{k=1}^{n-2} f_I(k,n-k-1)^2.
\end{array}
$$
Therefore
$$
\begin{array}{l}
\displaystyle E_Y(I_n^2)
= \frac{n-2}{n-1}\cdot \frac{2}{n-2}\sum_{k=1}^{n-2} E_Y(I^2_k) 
+\frac{2}{n-1} E_Y(I_{n-1}^2)\\
\quad\qquad \displaystyle  
+ \frac{n-2}{n-1}\cdot  \frac{4}{n-2}\sum_{k=1}^{n-2} E_Y(I_k)(f_I(k,n-1-k)+\ee_I(k,n-1-k))\\
\quad\qquad\qquad \displaystyle  
+\frac{4}{n-1}f_I(n-1,1)E_Y(I_{n-1})\\
\quad\qquad \displaystyle  
+ \frac{n-2}{n-1}\cdot \frac{2}{n-2}\sum_{k=1}^{n-2} E_Y(I_k)(E_Y(I_{n-1-k})+R_I(n-k-1))\\
\quad\qquad\qquad \displaystyle  
+ \frac{2}{n-1} E_Y(I_{n-1})E_Y(I_1)
\\
\quad\qquad \displaystyle 
+ \frac{n-2}{n-1}\cdot \frac{1}{n-2}\sum_{k=1}^{n-2} f_I(k,n-k-1)^2 
+\frac{1}{n-1}\sum_{k=1}^{n-1} f_I(k,n-k)^2\\
\quad\qquad\qquad \displaystyle-\frac{n-2}{n-1}\cdot \frac{1}{n-2}\sum_{k=1}^{n-2} f_I(k,n-k-1)^2
\\[2ex]
\quad \displaystyle 
= \frac{n-2}{n-1}  E_Y(I_{n-1}^2)+\frac{2}{n-1} E_Y(I_{n-1}^2) 
+\frac{4}{n-1} \sum_{k=1}^{n-2} \ee_I(k,n-1-k) E_Y(I_k)\\
\quad\qquad \displaystyle+\frac{4}{n-1}f_I(n-1,1) E_Y(I_{n-1}) +\frac{2}{n-1} \sum_{k=1}^{n-2} 
E_Y(I_k)R_I(n-k-1)\\
\quad\qquad \displaystyle+\frac{1}{n-1} \sum_{k=1}^{n-1} f_I(k,n-k)^2-\frac{1}{n-1}\sum_{k=1}^{n-2} f_I(k,n-k-1)^2 
\\[2ex]
\quad \displaystyle 
= \frac{n}{n-1}  E_Y(I_{n-1}^2)+\frac{4}{n-1} \sum_{k=1}^{n-2} \ee_I(k,n-1-k) E_Y(I_k)\\
\quad\qquad \displaystyle+\frac{4}{n-1}f_I(n-1,1) E_Y(I_{n-1}) +\frac{2}{n-1} \sum_{k=1}^{n-2} 
E_Y(I_k)R_I(n-k-1)\\
\quad\qquad \displaystyle+\frac{1}{n-1}\sum_{k=1}^{n-2} (f_I(k,n-k)^2- f_I(k,n-k-1)^2)+\frac{1}{n-1}\cdot f_I(n-1,1)^2
\end{array}
$$
as we claimed. \hspace*{\fill}\qed
\end{proof}

\section{The variance of Sackin's index}

The \emph{Sackin index} of a phylogenetic tree $T\in \TT_n$ is defined as the sum of the depths of its leaves:
$$
S(T)=\sum_{i=1}^n\delta_T(i).
$$
It is well known (see, for instance, \cite[Eq. (6)]{Rogers:96}) that if $T_k\in \TT(S_k)$,  for some $\emptyset\neq S_k\subsetneq \{1,\ldots,n\}$, and $T'_{n-k}\in \TT(S_k^c)$, then
$$
S(T_k\widehat{\ }\, T'_{n-k})=S(T_k)+S(T'_{n-k})+n.
$$

Let  $S_n$ be the random variable that chooses a tree $T\in\TT_n$ and computes $S(T)$.  Its expected value  under the Yule model
is  \cite{Heard92}
$$
E_Y(S_n)=2n(H_n-1).
$$
In particular $E_Y(S_1)=0$. Actually, the Sackin index of a tree with only one node is 0.
Notice moreover  that $E_Y(S_n)$ satisfies the recurrence
$$
E_Y(S_{n+1})=E_Y(S_n)+2H_n.
$$
Indeed,
$$
\begin{array}{rl}
E_Y(S_{n+1})-E_Y(S_n) & =2(n+1)(H_{n+1}-1)-2n(H_n-1)\\ & \displaystyle =
2(n+1)(H_{n}+\frac{1}{n+1}-1)-2n(H_n-1)=2H_n.
\end{array}
$$

In this section we prove that the variance of $S_n$ under this model is (see Cor. \ref{th:varS})
$$
\sigma^2_Y(S_n)=7n^2-4n^2H_n^{(2)}-2nH_n-n.
$$

\begin{theorem}\label{th:S2}
$\displaystyle E_Y(S_n^2)= 4n^2(H_n^2-H_n^{(2)} -2H_n) -2nH_n+11n^2-n$
\end{theorem}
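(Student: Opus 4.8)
The plan is to realize the Sackin index as a recursive shape index and then feed the resulting data into Corollary~\ref{cor:YI}. The relation $S(T_k\,\widehat{\ }\,T'_{n-k})=S(T_k)+S(T'_{n-k})+n$ recorded above shows that $S$ is a recursive shape index with $f_S(a,b)=a+b$, and since the Sackin index of the one-node tree vanishes we have $E_Y(S_1)=0$, so the hypothesis of the corollary holds. First I would assemble the four ingredients the corollary requires: the difference $\varepsilon_S(a,b-1)=f_S(a,b)-f_S(a,b-1)=1$; the boundary value $f_S(n-1,1)=n$; the increment $R_S(m)=E_Y(S_{m+1})-E_Y(S_m)=2H_m$, which is precisely the recurrence $E_Y(S_{n+1})=E_Y(S_n)+2H_n$ noted above; and $f_S(k,n-k)^2-f_S(k,n-k-1)^2=n^2-(n-1)^2=2n-1$. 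Substituting these, together with $E_Y(S_k)=2k(H_k-1)$, into Corollary~\ref{cor:YI} turns the abstract recursion into a fully explicit one of the shape
\[
\begin{aligned}
E_Y(S_n^2)=\frac{n}{n-1}\,E_Y(S_{n-1}^2)&+\frac{8}{n-1}\sum_{k=1}^{n-2}k(H_k-1)+8n(H_{n-1}-1)\\
&+\frac{8}{n-1}\sum_{k=1}^{n-2}k(H_k-1)H_{n-k-1}+\frac{n^2+(n-2)(2n-1)}{n-1},
\end{aligned}
\]
in which only two nontrivial sums remain to be evaluated.

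The second step is to evaluate these two sums using Lemma~\ref{lem:sumharm}. The first one splits as $\sum_{k=1}^{n-2}kH_k-\sum_{k=1}^{n-2}k$, handled by identity~(2) (with upper index $n-1$) and the elementary $\sum_{k=1}^{n-2}k=\binom{n-1}{2}$. For the second sum I would write $k(H_k-1)H_{n-k-1}=kH_kH_{n-k-1}-kH_{n-k-1}$; the term $\sum_{k=1}^{n-2}kH_kH_{n-1-k}$ is exactly identity~(8) with upper index $n-1$, giving $\binom{n}{2}\bigl(H_n^2-H_n^{(2)}-2H_n+2\bigr)$, while $\sum_{k=1}^{n-2}kH_{n-1-k}$ becomes, after the reindexing $j=n-1-k$, a combination of $\sum_j H_j$ and $\sum_j jH_j$ covered by identities~(1) and~(2). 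This is the only place where $H_n^2$ and $H_n^{(2)}$ enter, and the prefactor $\tfrac{8}{n-1}\binom{n}{2}=4n$ produces the piece $4n\bigl(H_n^2-H_n^{(2)}-2H_n+2\bigr)$; the outer coefficient $\tfrac{n}{n-1}$ in the recurrence is what ultimately lifts this to the $4n^2$ coefficient of the claimed formula.

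Finally I would prove the closed form by induction on $n$. The base case $n=1$ is immediate, since $E_Y(S_1^2)=0$ and the right-hand side of the statement evaluates to $4(1-1-2)-2+11-1=0$ at $n=1$ (one may also check $n=2$, where both sides equal $4$). For the inductive step, I substitute the hypothesis $E_Y(S_{n-1}^2)=4(n-1)^2\bigl(H_{n-1}^2-H_{n-1}^{(2)}-2H_{n-1}\bigr)-2(n-1)H_{n-1}+11(n-1)^2-(n-1)$ into the explicit recurrence above, clear the $\tfrac{1}{n-1}$ denominators, and rewrite everything in terms of $H_n$ via $H_n=H_{n-1}+\tfrac{1}{n}$ and $H_n^{(2)}=H_{n-1}^{(2)}+\tfrac{1}{n^2}$. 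The main obstacle is precisely this last simplification: it is conceptually routine but long, and one must verify that the many rational and $H_n$-linear correction terms generated by the conversion cancel to leave exactly $11n^2-n-2nH_n$ as the remainder beyond the $4n^2\bigl(H_n^2-H_n^{(2)}-2H_n\bigr)$ part. An alternative to induction is to divide the recurrence by $n$, turning it into the telescoping relation $E_Y(S_n^2)/n=E_Y(S_{n-1}^2)/(n-1)+\text{(explicit)}$ and summing from $n=1$; but that route reintroduces generalized-harmonic sums through the $\tfrac1n$ weights and is no shorter, so checking the stated answer by induction is the cleaner path.
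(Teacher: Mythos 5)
Your proposal is correct, and its core coincides with the paper's proof: the same realization of $S$ as a recursive shape index with $f_S(k,n-k)=n$, $\varepsilon_S=1$, $R_S(m)=2H_m$, the same explicit recurrence from Corollary~\ref{cor:YI} (your constant $\frac{n^2+(n-2)(2n-1)}{n-1}=3n-2$ matches the paper's), and the same evaluation of the two remaining sums via Lemma~\ref{lem:sumharm}, with identity~(8) at upper index $n-1$ supplying the decisive $\binom{n}{2}\bigl(H_n^2-H_n^{(2)}-2H_n+2\bigr)$ term and the reindexing $\sum_k kH_{n-1-k}=\sum_j(n-1-j)H_j$ handled by identities~(1) and~(2), exactly as in the paper. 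You diverge only in the last step. The paper \emph{derives} the closed form: it divides by $n$ to get the telescoping relation $x_n=x_{n-1}+4(H_n^2-H_n^{(2)})-8H_{n-1}/n+3-2/n$ and sums it, dispatching the resulting sums with identities~(4), (5), (6) of Lemma~\ref{lem:sumharm} --- so your worry that this route ``reintroduces generalized-harmonic sums'' is true but harmless, since $\sum_{k=1}^{n-1}H_k/(k+1)=\frac12(H_n^2-H_n^{(2)})$ is already on the shelf. You instead \emph{verify} the stated formula by induction, which needs no further summation identities, only $H_n=H_{n-1}+\frac1n$ and $H_n^{(2)}=H_{n-1}^{(2)}+\frac1{n^2}$; I checked that the inductive step does close (the coefficient of $H_n$ collapses to $-8n^2-2n$ and the rational remainder to $11n^2-n$), so your argument is complete. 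The trade-off is that the paper's telescoping would find the formula even without knowing it in advance, whereas your induction presupposes the answer --- legitimate here, since the statement supplies it --- and is marginally lighter on harmonic-number machinery.
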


\begin{proof}
As we have seen, Sackin's index satisfies the hypotheses in Corollary \ref{cor:YI}, with $f_S(k,n-k)=n$, and hence $\ee_S(k,n-k-1)=1$,  and
$R_S(k)=2H_{k}$. Therefore
$$
\begin{array}{l}
\displaystyle E_Y(S_n^2)= \frac{n}{n-1}  E_Y(S_{n-1}^2)+\frac{4}{n-1} \sum_{k=1}^{n-2} E_Y(S_k)\\
\quad\qquad \displaystyle+\frac{4}{n-1}n E_Y(S_{n-1}) +\frac{2}{n-1} \sum_{k=1}^{n-2} 
E_Y(S_k)2H_{n-k-1}\\
\quad\qquad \displaystyle+\frac{n^2}{n-1}+\frac{1}{n-1} \sum_{k=1}^{n-2} (n^2 - (n-1)^2)
\\[2ex]
\quad \displaystyle = \frac{n}{n-1}  E_Y(S_{n-1}^2)+\frac{4}{n-1} \sum_{k=1}^{n-2} E_Y(S_k)+8n(H_{n-1}-1)\\
\quad\qquad \displaystyle +\frac{4}{n-1} \sum_{k=1}^{n-2} 
E_Y(S_k)H_{n-k-1}+3n-2
\end{array}
$$
Now, by Lemma \ref{lem:sumharm},
$$
\begin{array}{l}
\displaystyle \frac{4}{n-1} \sum_{k=1}^{n-2} E_Y(S_k)=
\frac{8}{n-1}\sum_{k=1}^{n-2} k(H_k-1)\\
\qquad \displaystyle=\frac{8}{n-1}\Big(\frac{1}{4}(n-1)(n-2)(2H_{n-1}-1)-\frac{1}{2}(n-1)(n-2)\Big)\\
\qquad \displaystyle =2(n-2)(2H_{n-1}-3)
\end{array}
$$
and
$$
\begin{array}{l}
\displaystyle \frac{4}{n-1} \sum_{k=1}^{n-2} 
E_Y(S_k)H_{n-k-1}=\frac{8}{n-1}\sum_{k=1}^{n-2} 
k(H_k-1)H_{n-k-1}
\\
\qquad \displaystyle =\frac{8}{n-1}\sum_{k=1}^{n-2} 
kH_kH_{n-k-1}-\frac{8}{n-1}\sum_{k=1}^{n-2} 
kH_{n-k-1}\\
\qquad \displaystyle =\frac{8}{n-1}\sum_{k=1}^{n-2} 
kH_kH_{n-k-1}-\frac{8}{n-1}\sum_{k=1}^{n-2} 
(n-k-1)H_{k}\\
\qquad \displaystyle =
4n(H_{n}^2-H_{n}^{(2)}-2H_{n}+2)-8\sum_{k=1}^{n-2} 
H_{k}+\frac{8}{n-1}\sum_{k=1}^{n-2} kH_k\\
\qquad \displaystyle =
4n(H_{n}^2-H_{n}^{(2)}-2H_{n}+2)-8(n-1)(H_{n-1}-1)\\
\qquad\qquad \displaystyle +2(n-2)(2H_{n-1}-1)
\\
\qquad \displaystyle =
4n(H_{n}^2-H_{n}^{(2)}-2H_{n-1}-2\cdot \frac{1}{n}+2)-4nH_{n-1}+6n-4
\\
\qquad \displaystyle =
4n(H_{n}^2-H_{n}^{(2)}-3H_{n-1})+14n-12
\end{array}
$$
and thus
$$
\begin{array}{rl}
\displaystyle E_Y(S_n^2) & \displaystyle =\frac{n}{n-1}  E_Y(S_{n-1}^2)+2(n-2)(2H_{n-1}-3)+8n(H_{n-1}-1)\\
& \qquad\qquad \displaystyle +4n(H_{n}^2-H_{n}^{(2)}-3H_{n-1})+14n-12+3n-2\\
&\displaystyle =\frac{n}{n-1}  E_Y(S_{n-1}^2)+4n(H_n^2-H_n^{(2)})-8H_{n-1}+3n-2
\end{array}
$$
Setting $x_n=E_Y(S_n^2)/n$, this equation becomes
$$
x_n=x_{n-1}+4(H_n^2-H_n^{(2)})-8\frac{H_{n-1}}{n}+3-\frac{2}{n}
$$
The solution of this  recursive equation with $x_1=0$ is
$$
\begin{array}{rl}
x_n & \displaystyle =\sum_{k=2}^n\Big(4(H_k^2-H_k^{(2)})-8\frac{H_{k-1}}{k}+3-\frac{2}{k}\Big)\\
&  \displaystyle =4\sum_{k=2}^n (H_k^2-H_k^{(2)})-8\sum_{k=1}^{n-1}\frac{H_{k}}{k+1}+3(n-1)-2\sum_{k=2}^{n}\frac{1}{k}\\
&  \displaystyle =4\sum_{k=2}^n (H_k^2-H_k^{(2)})-4(H_n^2-H_n^{(2)})+3(n-1)-2(H_n-1)\\
&  \displaystyle =4\sum_{k=2}^{n-1} (H_k^2-H_k^{(2)})-2H_n+3n-1\\
& \displaystyle =4(nH_n^2-(2n+1)H_n+2n-nH_n^{(2)}+H_n)-2H_n+3n-1
\\
& \displaystyle =4n(H_n^2-H_n^{(2)} -2H_n)-2H_n +11n-1
\end{array}
$$
and therefore
$$
E_Y(S_n^2)=nx_n=4n^2(H_n^2-H_n^{(2)} -2H_n) -2nH_n +11n^2-n
$$
as we claimed. \hspace*{\fill}\qed
\end{proof}

\begin{corollary}\label{th:varS}
The variance of $S_n$ under the Yule model is
$$
\sigma^2_Y(S_n)=7n^2-4n^2H_n^{(2)}-2nH_n-n.
$$
\end{corollary}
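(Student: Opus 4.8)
The plan is to read off the variance directly from its definition $\sigma^2_Y(S_n)=E_Y(S_n^2)-E_Y(S_n)^2$, feeding in the second moment just established in Theorem \ref{th:S2} together with the already-known first moment $E_Y(S_n)=2n(H_n-1)$. So the corollary should reduce to a one-line consequence of the theorem, and the only work is a short algebraic simplification; there is no new probabilistic content to supply.

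Concretely, I would first square the first moment, $E_Y(S_n)^2=4n^2(H_n-1)^2=4n^2H_n^2-8n^2H_n+4n^2$, and then write the second moment from Theorem \ref{th:S2} in fully expanded form, $E_Y(S_n^2)=4n^2H_n^2-4n^2H_n^{(2)}-8n^2H_n-2nH_n+11n^2-n$, before subtracting term by term. The point worth watching is the cancellation of the two ``large'' pieces: the $4n^2H_n^2$ summands match exactly, and the $-8n^2H_n$ appearing in $E_Y(S_n^2)$ is cancelled by the cross term $-8n^2H_n$ produced by $(H_n-1)^2$. What then survives is $-4n^2H_n^{(2)}-2nH_n$ from the genuinely distinct $H_n^{(2)}$ and linear-$H_n$ contributions, the polynomial remainder $11n^2-4n^2=7n^2$, and the leftover $-n$, yielding precisely $\sigma^2_Y(S_n)=7n^2-4n^2H_n^{(2)}-2nH_n-n$.

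I do not expect any real obstacle here beyond careful bookkeeping in the subtraction; all the genuine difficulty lay in proving Theorem \ref{th:S2}, whose recursion-solving step (reducing to $x_n=E_Y(S_n^2)/n$ and summing the telescoped contributions via Lemma \ref{lem:sumharm}) carries the computational weight. Once that closed form is in hand, the variance drops out immediately, and the clean cancellation of the $H_n^2$ terms is exactly what explains why the final expression is so much simpler than $E_Y(S_n^2)$ itself.
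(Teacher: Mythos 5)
Your proposal is correct and coincides with the paper's own proof: Corollary \ref{th:varS} is obtained exactly by substituting $E_Y(S_n^2)$ from Theorem \ref{th:S2} and $E_Y(S_n)=2n(H_n-1)$ into $\sigma^2_Y(S_n)=E_Y(S_n^2)-E_Y(S_n)^2$, and your expansion and cancellations (the $4n^2H_n^2$ and $-8n^2H_n$ terms dropping out, leaving $7n^2-4n^2H_n^{(2)}-2nH_n-n$) are all accurate.
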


\begin{proof}
This formula is obtained by replacing
$$
\begin{array}{rl}
E_Y(S_n^2) & =4n^2(H_n^2-H_n^{(2)} -2H_n) -2nH_n +11n^2-n\\
E_Y(S_n) & =2n(H_n-1)
\end{array}
$$
 in the identity $\sigma^2_Y(S_n)=E_Y(S_n^2)-E_Y(S_n)^2$. \hspace*{\fill}\qed
\end{proof}

From this exact formula we can obtain an $O(1/n)$ approximation of $\sigma^2_Y(S_n)$, which refines the limit formula obtained in  \cite{BFJ:06}.

\begin{corollary}
$\displaystyle \sigma^2_Y(S_n)=\Big(7-\frac{2\pi^2}{3} \Big)n^2+n(3-2\ln(n)-2\gamma)-3+O\Big(\frac{1}{n}\Big).$
\end{corollary}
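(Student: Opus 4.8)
The plan is to substitute the known asymptotic expansions of the harmonic numbers, recorded in the Preliminaries, into the exact formula for $\sigma^2_Y(S_n)$ from Corollary \ref{th:varS}, and then to collect terms by order of magnitude. Recall that
$$
\sigma^2_Y(S_n)=7n^2-4n^2H_n^{(2)}-2nH_n-n,
$$
so the only non-polynomial contributions come from the two terms $-4n^2H_n^{(2)}$ and $-2nH_n$, and the whole argument reduces to expanding these two terms to order $n^0$.

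First I would expand $-4n^2H_n^{(2)}$ using $H_n^{(2)}=\frac{\pi^2}{6}-\frac1n+\frac1{2n^2}+O(1/n^3)$; multiplying by $-4n^2$ gives $-\frac{2\pi^2}{3}n^2+4n-2+O(1/n)$. This is where the leading coefficient is completed to $7-\frac{2\pi^2}{3}$, and where a contribution $+4n$ to the linear term and a constant $-2$ appear. Next I would expand $-2nH_n$ using $H_n=\ln(n)+\gamma+\frac1{2n}+O(1/n^2)$; multiplying by $-2n$ yields $-2n\ln(n)-2\gamma n-1+O(1/n)$, which supplies the $-2n\ln(n)$ and $-2\gamma n$ terms together with a further constant $-1$.

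Finally I would gather everything: the $n^2$ terms combine to $\big(7-\frac{2\pi^2}{3}\big)n^2$; the polynomial linear terms $4n-n$ together with $-2n\ln(n)-2\gamma n$ give $n\big(3-2\ln(n)-2\gamma\big)$; and the two constants $-2$ and $-1$ give the claimed constant $-3$, with all remaining contributions absorbed into $O(1/n)$. The computation is entirely routine, and there is no real obstacle; the only point requiring a little care is the bookkeeping of the constant term, namely checking that the $O(1/n)$ remainder in each expansion genuinely contributes nothing at order $n^0$ while the explicit $1/n$ term of $H_n^{(2)}$ (scaled by $-4n^2$) and the $1/(2n)$ term of $H_n$ (scaled by $-2n$) are correctly accounted for.
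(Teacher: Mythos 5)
Your proposal is correct and follows exactly the route the paper intends: the paper states this corollary without proof, deriving it implicitly by substituting the asymptotic expansions $H_n=\ln(n)+\gamma+\frac{1}{2n}+O(1/n^2)$ and $H_n^{(2)}=\frac{\pi^2}{6}-\frac{1}{n}+\frac{1}{2n^2}+O(1/n^3)$ from the Preliminaries into the exact formula of Corollary~\ref{th:varS}. Your bookkeeping of the coefficients, including the constant $-3$ arising as $-2-1$ from the subleading terms of the two expansions, is accurate.
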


\section{The variance of the total cophenetic index $\Phi$}

The \emph{total cophenetic index} of a phylogenetic tree $T\in \TT_n$ is defined as the sum of the cophenetic values of its pairs of leaves:
$$
\Phi(T)=\sum_{1\leq i<j\leq n}\varphi_T(i,j).
$$
By \cite[Lem. 4]{MRR},  if $T_k\in \TT(S_k)$,  for some $\emptyset\neq S_k\subsetneq \{1,\ldots,n\}$ with $k$ elements, and $T'_{n-k}\in \TT(S_k^c)$, then
$$
\Phi(T_k\widehat{\ }\, {}T_{n-k})= \Phi(T_k)+ \Phi(T_{n-k})+\binom{k}{2}+\binom{n-k}{2}.
$$
Therefore, $\Phi$ is a recursive shape index with $f_{\Phi}(k,n-k)=\binom{k}{2}+\binom{n-k}{2}$, and in particular $\ee_{\Phi}(k,n-k-1)=n-k-1$.

Let  $\Phi_n$ be the random variable that chooses a tree $T\in\TT_n$ and computes its total cophenetic index $\Phi(T)$.
The expected value  under the Yule model of $\Phi_n$ 
is  \cite{MRR}
$$
E_Y(\Phi_n)=n(n-1)-2n(H_n-1)=n(n+1-2H_n).
$$
In particular, $E_Y(\Phi_1)=0$.  Actually, the total cophenetic index of a tree with only one node is 0. Moreover, we have that
$$
E_Y(\Phi_n)=E_Y(\Phi_{n-1})+2(n-1-H_{n-1}),
$$
and therefore $R(k)=2(k-H_{k})$.

In this section we prove that the variance of $\Phi_n$ under this model is (see Cor. \ref{th:varphi})
$$
\sigma^2_Y(\Phi_n)=\frac{1}{12}(n^4-10n^3+131 n^2-2n)-4n^2 H_n^{(2)}-6nH_n
$$

\begin{theorem}\label{th:phi2}
$$
\begin{array}{rl}
E_Y(\Phi_n^2) & = \displaystyle \frac{1}{12}(13n^4+14n^3+143 n^2-2n)+4n^2(H_n^2-H_n^{(2)})\\[2ex]
& \quad -2(2n^3+2n^2+3n)H_n
\end{array}
$$
\end{theorem}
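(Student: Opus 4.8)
The plan is to follow verbatim the strategy used for Sackin's index in Theorem~\ref{th:S2}, applying Corollary~\ref{cor:YI} to $I=\Phi$. First I would record the data needed to instantiate the corollary. Since $\Phi$ is a recursive shape index with $f_\Phi(k,n-k)=\binom{k}{2}+\binom{n-k}{2}$, we have $\ee_\Phi(k,n-1-k)=n-k-1$ and $f_\Phi(n-1,1)=\binom{n-1}{2}$; from the stated mean $E_Y(\Phi_n)=n(n+1-2H_n)$ we read off the increment $R_\Phi(n-k-1)=2\bigl((n-k-1)-H_{n-k-1}\bigr)$; and $E_Y(\Phi_1)=0$, so the hypothesis of the corollary holds. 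Substituting these into Corollary~\ref{cor:YI} expresses $E_Y(\Phi_n^2)$ as $\frac{n}{n-1}E_Y(\Phi_{n-1}^2)$ plus the five explicit $k$-sums appearing there.

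Next I would evaluate each of those sums in closed form. The purely polynomial contributions, coming from $f_\Phi(n-1,1)^2$ and from the telescoping difference $\sum_{k=1}^{n-2}\bigl(f_\Phi(k,n-k)^2-f_\Phi(k,n-k-1)^2\bigr)$, are routine once $\binom{k}{2}^2$, $\binom{n-k}{2}^2$ and their cross term are expanded into powers of $k$ and summed with elementary power-sum formulas. The two single-harmonic sums, namely $\frac{4}{n-1}\sum_{k=1}^{n-2}(n-k-1)E_Y(\Phi_k)$ and $\frac{4}{n-1}f_\Phi(n-1,1)E_Y(\Phi_{n-1})$, reduce, after inserting $E_Y(\Phi_k)=k^2+k-2kH_k$, to the power-weighted harmonic sums $\sum kH_k$ and $\sum k^2H_k$ of Lemma~\ref{lem:sumharm}(2),(3) together with ordinary power sums.

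The genuinely awkward term is the convolution $\frac{2}{n-1}\sum_{k=1}^{n-2}E_Y(\Phi_k)R_\Phi(n-k-1)$. Expanding the product $(k^2+k-2kH_k)\cdot 2\bigl((n-k-1)-H_{n-k-1}\bigr)$ produces, besides polynomial-times-single-harmonic pieces that I would handle by reindexing $j=n-1-k$ and again invoking Lemma~\ref{lem:sumharm}(1)--(3), the doubly harmonic sums $\sum kH_kH_{n-k-1}$ and $\sum H_kH_{n-k-1}$. These are precisely the convolutions covered by Lemma~\ref{lem:sumharm}(8) and (7), applied with $n-1$ in place of $n$, and this is where I expect the bookkeeping to be most delicate, since the resulting $H_{n-1}$ terms must be reconciled with the $H_n$, $H_n^2$, $H_n^{(2)}$ terms through the relation $H_n=H_{n-1}+1/n$ (exactly the kind of reconciliation carried out in the Sackin proof).

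Finally, collecting all contributions turns the identity into a first-order recursion $x_n=x_{n-1}+(\text{closed-form term in }n)$ for $x_n=E_Y(\Phi_n^2)/n$, just as in the Sackin computation, since $\frac{n}{n-1}E_Y(\Phi_{n-1}^2)/n=x_{n-1}$. I would then solve it by summing from $2$ to $n$ with $x_1=0$; the harmonic parts of the summand are resolved using Lemma~\ref{lem:sumharm}(4),(5),(6) for the sums of $H_k/(k+1)$, $H_k^2$ and $H_k^{(2)}$, and multiplying the result by $n$ yields the asserted formula for $E_Y(\Phi_n^2)$. The only real difficulty throughout is the sheer volume of algebra: keeping the polynomial-in-$n$ coefficients separate from the $H_n$, $H_n^2$ and $H_n^{(2)}$ coefficients, and tracking the many cancellations without error.
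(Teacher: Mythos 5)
Your plan coincides with the paper's proof in every essential respect: it instantiates Corollary~\ref{cor:YI} with the same data ($\ee_\Phi(k,n-1-k)=n-k-1$, $R_\Phi(k)=2(k-H_k)$, $E_Y(\Phi_1)=0$), reduces the resulting sums via reindexing and Lemma~\ref{lem:sumharm}(2),(3) together with the convolution identity (8) applied with $n-1$ in place of $n$, and solves the first-order recursion for $x_n=E_Y(\Phi_n^2)/n$ with $x_1=0$ using Lemma~\ref{lem:sumharm}(4)--(6), exactly as the paper does. Your only slip is immaterial: the convolution term produces just $\sum_{k}kH_kH_{n-k-1}$ (identity (8)) and not also $\sum_k H_kH_{n-k-1}$, since the doubly harmonic part of $E_Y(\Phi_k)R_\Phi(n-k-1)=(k^2+k-2kH_k)\cdot 2\bigl((n-k-1)-H_{n-k-1}\bigr)$ is exactly $4kH_kH_{n-k-1}$.
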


\begin{proof}
As we have seen, $\Phi$ satisfies the hypotheses of Corollary \ref{cor:YI}, with
$$
\ee_\Phi(k,n-k-1)=n-k-1,\quad R(k)=2(k-H_k).
$$
Therefore, by the aforementioned result,
$$
\begin{array}{l}
\displaystyle E_Y(\Phi_n^2)=  \frac{n}{n-1}  E_Y(\Phi_{n-1}^2)+\frac{4}{n-1} \sum_{k=1}^{n-2} (n-k-1)E_Y(\Phi_k)\\
\quad\qquad \displaystyle+\frac{4}{n-1}\binom{n-1}{2} E_Y(\Phi_{n-1})+\frac{1}{n-1}\binom{n-1}{2}^2\\
\quad\qquad \displaystyle +\frac{2}{n-1} \sum_{k=1}^{n-2} 2
E_Y(\Phi_k)((n-k-1)-H_{n-k-1})\\
\quad\qquad \displaystyle+\frac{1}{n-1} \sum_{k=1}^{n-2} \Big(\Big(\binom{k}{2}+\binom{n-k}{2}\Big)^2- \Big(\binom{k}{2}+\binom{n-k-1}{2}\Big)^2 \Big)\\[2ex]
 \quad \displaystyle =
 \frac{n}{n-1}E_Y(\Phi_{n-1}^2)+\frac{8}{n-1}\sum_{k=1}^{n-2}(n-k-1)(k^2+k-2kH_k)\\
\quad\qquad \displaystyle+2(n-2)(n-1)(n-2H_{n-1})\\
 \qquad\quad \displaystyle -\frac{4}{n-1}\sum_{k=1}^{n-2}H_{n-k-1}(k^2+k-2kH_k)+\frac{1}{12}(n-2)(7n^2-21n+12)
\\[2ex]
\quad \displaystyle =
 \frac{n}{n-1}E_Y(\Phi_{n-1}^2)-4(n-2)(n-1)H_{n-1}\\
 \qquad\quad \displaystyle
 -16\sum_{k=1}^{n-2}kH_k+\frac{16}{n-1}\sum_{k=1}^{n-2}k^2H_k\\
 \qquad\quad \displaystyle
 -\frac{4}{n-1}\sum_{k=1}^{n-2}k^2H_{n-k-1}- \frac{4}{n-1}\sum_{k=1}^{n-2}kH_{n-k-1}
\\
 \qquad\quad \displaystyle +\frac{8}{n-1}\sum_{k=1}^{n-2}kH_kH_{n-k-1}+\frac{1}{12}(n-2)(39n^2-37n+12)
 \\[2ex]
  \quad \displaystyle =
 \frac{n}{n-1}E_Y(\Phi_{n-1}^2)-4(n-2)(n-1)H_{n-1}\\
 \qquad\quad \displaystyle
 -16\sum_{k=1}^{n-2}kH_k+\frac{16}{n-1}\sum_{k=1}^{n-2}k^2H_k\\
 \qquad\quad \displaystyle
 -\frac{4}{n-1}\sum_{k=1}^{n-2}(n-k-1)^2H_{k}- \frac{4}{n-1}\sum_{k=1}^{n-2}(n-k-1)H_{k}
\\
 \qquad\quad \displaystyle +\frac{8}{n-1}\sum_{k=1}^{n-2}kH_kH_{n-k-1}+\frac{1}{12}(n-2)(39n^2-37n+12)
 \\[2ex] 
 \end{array}
$$
$$
\begin{array}{l}
  \quad \displaystyle =
 \frac{n}{n-1}E_Y(\Phi_{n-1}^2)-4(n-2)(n-1)H_{n-1}\\
 \qquad\quad \displaystyle
-16\sum_{k=1}^{n-2}kH_k+\frac{16}{n-1}\sum_{k=1}^{n-2}k^2H_k+\frac{8}{n-1}\sum_{k=1}^{n-2}kH_kH_{n-k-1}\\
 \qquad\quad \displaystyle-4(n-1)\sum_{k=1}^{n-2} H_k+8\sum_{k=1}^{n-2}kH_k-\frac{4}{n-1}\sum_{k=1}^{n-2}k^2H_k-4\sum_{k=1}^{n-2}H_k
\\
 \qquad\quad \displaystyle +\frac{4}{n-1}\sum_{k=1}^{n-2}kH_k+\frac{1}{12}(n-2)(39n^2-37n+12)
 \\[2ex]
  \quad \displaystyle =
 \frac{n}{n-1}E_Y(\Phi_{n-1}^2)-4(n-2)(n-1)H_{n-1}\\
 \qquad\quad \displaystyle
+\frac{12}{n-1}\sum_{k=1}^{n-2}k^2H_k+\frac{12-8n}{n-1}\sum_{k=1}^{n-2}kH_k-4n\sum_{k=1}^{n-2}H_k
\\
 \qquad\quad \displaystyle+\frac{8}{n-1}\sum_{k=1}^{n-2}kH_kH_{n-k-1} +\frac{1}{12}(n-2)(39n^2-37n+12)\\[2ex]
 \quad \displaystyle =
 \frac{n}{n-1}E_Y(\Phi_{n-1}^2)-4(n-2)(n-1)H_{n-1}\\
 \qquad\quad \displaystyle
+\frac{12}{n-1}\cdot \frac{1}{36}(n-1)(n-2)\big((12n-18)H_{n-1}-4n+3\big)\\
 \qquad\quad \displaystyle
+\frac{12-8n}{n-1}\cdot \frac{1}{4}(n-1)(n-2)(2H_{n-1}-1)-4n(n-1)(H_{n-1}-1)\\
 \qquad\quad \displaystyle
+\frac{8}{n-1}\binom{n}{2}(H_n^2-H_n^{(2)}-2H_n+2)+\frac{1}{12}(n-2)(39n^2-37n+12)
\\[2ex]
 \quad \displaystyle =
 \frac{n}{n-1}E_Y(\Phi_{n-1}^2)+4n(H_n^2-H_n^{(2)})-8nH_n\\
 \qquad\quad \displaystyle
-8(n-1)^2H_{n-1}+\frac{1}{12}(39 n^3-59 n^2+94 n+24)\\[2ex]
 \quad \displaystyle =
\frac{n}{n-1}E_Y(\Phi_{n-1}^2)+4n(H_n^2-H_n^{(2)})
-8(n^2-n+1)H_{n-1}\\
\qquad \quad  \displaystyle
+\frac{1}{12}(39 n^3-59 n^2+94 n-72)
 \end{array}
$$
Setting $x_n=E_Y(\Phi_n^2)/n$, this equation becomes
$$
x_n=x_{n-1}+4(H_n^2-H_n^{(2)})
-8\Big(n-1+\frac{1}{n}\Big)H_{n-1}
+\frac{1}{12}\Big(39 n^2-59 n+94-\frac{72}{n}\Big)
$$
The solution of this recursive equation with $x_1=0$ is
$$
\begin{array}{rl}
x_n & \displaystyle =\sum_{k=2}^n\Big(4(H_k^2-H_k^{(2)})
-8 \Big(k-1+\frac{1}{k} \Big)H_{k-1}
+\frac{1}{12}\Big(39 k^2-59 k+94-\frac{72}{k}\Big)\Big)\\
& \displaystyle =4\sum_{k=2}^n H_k^2-4\sum_{k=2}^nH_k^{(2)}
-8\sum_{k=1}^{n-1} kH_{k} -8\sum_{k=1}^{n-1} \frac{H_{k}}{k+1}\\
& \qquad  \displaystyle
+\frac{1}{12}\sum_{k=2}^n\Big(39 k^2-59 k+94-\frac{72}{k}\Big)
\\
& \displaystyle =4\sum_{k=2}^{n-1} H_k^2-4\sum_{k=2}^{n-1}H_k^{(2)}
-8\sum_{k=1}^{n-1} kH_{k}+\frac{1}{12}\sum_{k=2}^n\Big(39 k^2-59 k+94-\frac{72}{k}\Big)
\end{array}
$$
$$
\begin{array}{rl}
\hphantom{x_n}  &\displaystyle = 4n(H_n^2-H_n^{(2)})-8nH_n+8n-2n(n-1)(2H_{n}-1)
-6H_n\\
& \qquad  \displaystyle
+\frac{1}{12}(13 n^3-10 n^2+71 n-2)\\
& \displaystyle = 4n(H_n^2-H_n^{(2)})-2(2n^2+2n+3)H_n+\frac{1}{12}(13n^3+14n^2+143 n-2)\\
\end{array}
$$
Therefore
$$
\begin{array}{rl}
E_Y(\Phi_n^2) & \displaystyle =nx_n=4n^2(H_n^2-H_n^{(2)})-2(2n^3+2n^2+3n)H_n\\ & \qquad \displaystyle +\frac{1}{12}(13n^4+14n^3+143 n^2-2n)
\end{array}$$
as we claimed. \hspace*{\fill}\qed
\end{proof}

\begin{corollary}\label{th:varphi}
The covariance of $\Phi_n$ under the Yule model is
$$
\sigma^2_Y(\Phi_n)=\frac{1}{12}(n^4-10n^3+131 n^2-2n)-4n^2 H_n^{(2)}-6nH_n
$$
\end{corollary}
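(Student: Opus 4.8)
The plan is to read off the variance directly from the definitional identity $\sigma^2_Y(\Phi_n)=E_Y(\Phi_n^2)-E_Y(\Phi_n)^2$, substituting the second moment furnished by Theorem \ref{th:phi2} and the known first moment $E_Y(\Phi_n)=n(n+1-2H_n)$. Since both ingredients are already in hand, the entire task reduces to one subtraction and a simplification, organized by the type of harmonic term that appears.

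First I would expand the square of the first moment,
$$
E_Y(\Phi_n)^2=n^2(n+1-2H_n)^2=n^4+2n^3+n^2-(4n^3+4n^2)H_n+4n^2H_n^2,
$$
and then subtract it from the expression for $E_Y(\Phi_n^2)$ in Theorem \ref{th:phi2}, tracking cancellations by kind. The two $H_n^2$ contributions are each equal to $4n^2H_n^2$ and cancel exactly, so no squared-harmonic term survives; the single $H_n^{(2)}$ term $-4n^2H_n^{(2)}$ passes through untouched; and the linear $H_n$ terms combine as $-(4n^3+4n^2+6n)H_n+(4n^3+4n^2)H_n=-6nH_n$. Writing the remaining polynomial part over the common denominator $12$ gives
$$
\tfrac{1}{12}(13n^4+14n^3+143n^2-2n)-(n^4+2n^3+n^2)=\tfrac{1}{12}(n^4-10n^3+131n^2-2n),
$$
and assembling these pieces produces exactly the claimed formula.

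I expect no genuine obstacle here: all the real difficulty has been absorbed into Theorem \ref{th:phi2}, whose derivation already invoked the summation identities of Lemma \ref{lem:sumharm} together with the recursion of Corollary \ref{cor:YI}. Relative to that second moment, the corollary is a purely algebraic consequence. The only point worth watching is the bookkeeping that forces the $H_n^2$ terms to cancel; this cancellation also serves as a useful consistency check, since a surviving $H_n^2$ coefficient in the variance of an index whose mean is linear-plus-$H_n$ would be an immediate signal of an arithmetic error upstream.
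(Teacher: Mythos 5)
Your proposal is correct and follows exactly the paper's route: the paper's proof of this corollary is precisely the substitution of $E_Y(\Phi_n^2)$ from Theorem~\ref{th:phi2} and $E_Y(\Phi_n)=n(n+1-2H_n)$ into $\sigma^2_Y(\Phi_n)=E_Y(\Phi_n^2)-E_Y(\Phi_n)^2$, which the paper leaves as a one-line remark and you carry out explicitly. Your bookkeeping checks out, including the cancellation of the $4n^2H_n^2$ terms and the polynomial reduction to $\tfrac{1}{12}(n^4-10n^3+131n^2-2n)$.
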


\begin{proof}
Simply replace in the formula $\sigma^2_Y(\Phi_n)=E_Y(\Phi_n^2)-E_Y(\Phi_n)^2$ the value of $E_Y(\Phi_n^2)$ obtained in the last theorem and the value of $E_Y(\Phi_n)$ recalled above. \hspace*{\fill}\qed
\end{proof}

\begin{corollary}
$$
\begin{array}{rl}
\sigma^2_Y(\Phi_n) & =\displaystyle  \frac{1}{12}n^4-\frac{5}{6}n^3+\Big(\frac{131}{12}-\frac{2\pi^2}{3}\Big)n^2-6n\ln(n) +\Big(\frac{23}{6}-6\gamma)n-5\\
&\quad \displaystyle +O\Big(\frac{1}{n}\Big)
\end{array}
$$\end{corollary}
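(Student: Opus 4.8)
The plan is to substitute the asymptotic expansions of $H_n$ and $H_n^{(2)}$ recorded in Section~\ref{subsec:hn} directly into the exact formula for $\sigma^2_Y(\Phi_n)$ established in Corollary~\ref{th:varphi}, and then to collect terms by powers of $n$. Writing the exact formula in expanded form as
$$
\sigma^2_Y(\Phi_n)=\frac{1}{12}n^4-\frac{5}{6}n^3+\frac{131}{12}n^2-\frac{1}{6}n-4n^2H_n^{(2)}-6nH_n,
$$
I observe that the only transcendental contributions come from the last two terms, so the whole computation reduces to expanding these two products to sufficient accuracy.

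First I would handle the term $-4n^2H_n^{(2)}$. Inserting $H_n^{(2)}=\frac{\pi^2}{6}-\frac{1}{n}+\frac{1}{2n^2}+O(1/n^3)$ and multiplying by $-4n^2$ gives
$$
-4n^2H_n^{(2)}=-\frac{2\pi^2}{3}n^2+4n-2+O\Big(\frac{1}{n}\Big).
$$
Next I would treat $-6nH_n$. Inserting $H_n=\ln(n)+\gamma+\frac{1}{2n}-\frac{1}{12n^2}+O(1/n^3)$ and multiplying by $-6n$ gives
$$
-6nH_n=-6n\ln(n)-6\gamma n-3+O\Big(\frac{1}{n}\Big).
$$
Adding these two expansions to the polynomial part and grouping coefficients then finishes the argument: the $n^2$ terms combine to $\big(\frac{131}{12}-\frac{2\pi^2}{3}\big)n^2$, the $n$ terms to $\big(-\frac{1}{6}+4-6\gamma\big)n=\big(\frac{23}{6}-6\gamma\big)n$, and the constants to $-2-3=-5$, yielding exactly the announced formula with remainder $O(1/n)$.

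The only point requiring care, and hence the main (if modest) obstacle, is the bookkeeping of the error terms. Since $H_n^{(2)}$ is multiplied by $n^2$ and the target remainder is $O(1/n)$, the expansion of $H_n^{(2)}$ must be carried through order $1/n^2$; it is precisely this $1/(2n^2)$ term that produces the constant $-2$, while the $O(1/n^3)$ tail contributes only $O(1/n)$ after multiplication by $n^2$. Symmetrically, the $1/(2n)$ term of $H_n$, multiplied by $-6n$, produces the constant $-3$, whereas the $-1/(12n^2)$ term is absorbed into the $O(1/n)$ remainder. The entire content of the proof is therefore to verify that no term of order $n$ or larger is dropped and that the truncation errors genuinely collapse to $O(1/n)$; no further idea is needed.
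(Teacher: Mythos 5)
Your proposal is correct and is exactly the intended argument: the paper states this corollary without an explicit proof precisely because it follows, as you show, by substituting the asymptotic expansions of $H_n$ and $H_n^{(2)}$ from Section~\ref{subsec:hn} into the exact formula of Corollary~\ref{th:varphi} and collecting terms. Your bookkeeping of the truncation orders (carrying $H_n^{(2)}$ through $1/(2n^2)$ to capture the constant $-2$, and $H_n$ through $1/(2n)$ to capture the constant $-3$) is accurate and complete.
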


\section{The variance of $\oF$}

For every $T\in \TT_n$, let
$$
\oF(T)=S(T)+\Phi(T)=\sum\limits_{1\leq i\leq j\leq n} \varphi_T(i,j)
$$

\begin{lemma}\label{lem:hat}
If $T_k\in \TT(S_k)$, with $\emptyset\neq S_k\subsetneq \{1,\ldots,n\}$ and $|S_k|=k$, and $T'_{n-k}\in\TT(S_k^c)$, then
$$
\oF(T_k\widehat{\ }\, {}T'_{n-k})=\oF(T_k)+\oF(T'_{n-k})+\binom{k+1}{2}+\binom{n-k+1}{2}.
$$
\end{lemma}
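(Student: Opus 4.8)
The plan is to exploit the fact that $\oF$ is by definition the sum of the two indices $S$ and $\Phi$, whose behaviour under tree-sum has already been recorded in the two preceding sections. Concretely, I would simply add the identity $S(T_k\widehat{\ }\, T'_{n-k})=S(T_k)+S(T'_{n-k})+n$ to the identity $\Phi(T_k\widehat{\ }\, T'_{n-k})=\Phi(T_k)+\Phi(T'_{n-k})+\binom{k}{2}+\binom{n-k}{2}$, obtaining
$$
\oF(T_k\widehat{\ }\, T'_{n-k})=\oF(T_k)+\oF(T'_{n-k})+n+\binom{k}{2}+\binom{n-k}{2}.
$$
It then remains only to recognise the constant term as $\binom{k+1}{2}+\binom{n-k+1}{2}$.

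The key step — and the only one with any content — is this last simplification, and it follows at once from Pascal's rule $\binom{m+1}{2}=\binom{m}{2}+m$. Applying it with $m=k$ and with $m=n-k$ and adding gives $\binom{k+1}{2}+\binom{n-k+1}{2}=\binom{k}{2}+\binom{n-k}{2}+k+(n-k)=\binom{k}{2}+\binom{n-k}{2}+n$, which matches the constant above exactly, so the claim follows with no further computation. There is essentially no obstacle here; the lemma is genuinely easy, and its only role is to put $\oF$ into the framework of a recursive shape index with $f_{\oF}(k,n-k)=\binom{k+1}{2}+\binom{n-k+1}{2}$, ready for Corollary~\ref{cor:YI}.

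As an alternative I would give a self-contained derivation straight from the definition $\oF(T)=\sum_{1\le i\le j\le n}\varphi_T(i,j)$, which makes the clean additive constant transparent without invoking Pascal's rule. In $T_k\widehat{\ }\, T'_{n-k}$ every depth inside each of the two pendant subtrees increases by exactly one, while the lowest common ancestor of a leaf of $S_k$ and a leaf of $S_k^c$ is the new root, of depth $0$. Splitting the sum over pairs $i\le j$ into three groups — both indices in $S_k$, both in $S_k^c$, and one in each — the cross terms contribute $0$; the $S_k$-pairs contribute $\oF(T_k)$ together with a $+1$ for each of the $\binom{k+1}{2}$ pairs $i\le j$ with both indices in $S_k$, and symmetrically for $S_k^c$. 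This reproduces $\binom{k+1}{2}+\binom{n-k+1}{2}$ directly, the binomial coefficients now appearing as counts of diagonal-inclusive pairs. Either route is completely elementary; the only thing worth watching in the direct version is the bookkeeping of the diagonal terms $\varphi_T(i,i)=\delta_T(i)$, which are exactly what turns the two $\binom{\cdot}{2}$'s of $\Phi$ into the $\binom{\cdot+1}{2}$'s of $\oF$.
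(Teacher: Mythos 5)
Your first argument is exactly the paper's proof: add the tree-sum identities for $S$ and $\Phi$ and absorb the extra $n$ via $\binom{m}{2}+m=\binom{m+1}{2}$, which is precisely how the paper derives the lemma. Both your version and your (also correct) direct-from-definition alternative are sound, so there is nothing to fix.
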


\begin{proof}
Since $S(T_k\widehat{\ }\, {}T'_{n-k})=S(T_k)+S(T'_{n-k})+n$ and 
$$
\Phi(T_k\widehat{\ }\, {}T'_{n-k})= \Phi(T_k)+ \Phi(T'_{n-k})+\binom{k}{2}+\binom{n-k}{2},
$$
 we have that
$$
\begin{array}{rl}
\oF(T_k\widehat{\ }\, {}T'_{n-k}) & \displaystyle =\oF(T_k)+\oF(T'_{n-k})+\binom{k}{2}+\binom{n-k}{2}+n\\ & 
\displaystyle =\oF(T_k)+\oF(T'_{n-k})+\binom{k+1}{2}+\binom{n-k+1}{2}.
\end{array}
$$
{}\ \hspace*{\fill}\qed
\end{proof}

So, $\oF$ is a recursive shape index for phylogenetic trees with
$f_{\oF}(a,b)=\binom{a+1}{2}+\binom{b+1}{2}$,
and hence $\ee_{\oF}(a,b)=b+1$.

Let  $\oF_n$ be the random variable that chooses a tree $T\in\TT_n$ and computes $\oF(T)$.  Its expected value  under the Yule model
is \cite{MRR} 
$$
E_Y(\oF_n)=n(n-1).
$$
In particular, $E_Y(\oF_1)=0$ (actually, $\oF_1=0$) and $R_{\oF}(k)=2k$. In this section we compute the variance of $\oF_n$.

In this section we prove that the variance of $\oF_n$ under this model is (see Cor. \ref{th:varoF})
$$
\sigma^2_Y(\oF_n)=2\binom{n}{4}
$$

\begin{theorem}\label{th:oF}
$E_Y(\oF_n^2)=\frac{1}{12}(13n^4-30n^3+23n^2-6n).$
\end{theorem}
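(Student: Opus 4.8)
The plan is to apply Corollary \ref{cor:YI} directly to the index $I=\oF$, exactly as in the proofs of Theorems \ref{th:S2} and \ref{th:phi2}, but the computation will be considerably lighter. The reason is that for $\oF$ the relevant data are genuine polynomials in the leaf count: we have recorded $E_Y(\oF_k)=k(k-1)$ and $R_{\oF}(k)=2k$, together with $\ee_{\oF}(k,n-1-k)=n-k$ and $f_{\oF}(n-1,1)=\binom{n}{2}+1$. Consequently \emph{no harmonic numbers appear anywhere}, and every sum occurring in the corollary collapses to an elementary power sum. So the first step is simply to substitute these values into the statement of Corollary \ref{cor:YI}.

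The second step is to evaluate the three sums that result: $\sum_{k=1}^{n-2}(n-k)\,k(k-1)$ coming from the $\ee_{\oF}$-term, $\sum_{k=1}^{n-2} k(k-1)\cdot 2(n-k-1)$ coming from the $R_{\oF}$-term, and the difference-of-squares sum $\sum_{k=1}^{n-2}\bigl(f_{\oF}(k,n-k)^2-f_{\oF}(k,n-k-1)^2\bigr)$. For the last one the key simplification is to factor the summand as a difference of squares: writing $A=\binom{k+1}{2}$ and using $\binom{n-k+1}{2}-\binom{n-k}{2}=n-k$, one gets
$$
f_{\oF}(k,n-k)^2-f_{\oF}(k,n-k-1)^2=(n-k)\Bigl(2\tbinom{k+1}{2}+\tbinom{n-k+1}{2}+\tbinom{n-k}{2}\Bigr).
$$
After this factorization each summand is a polynomial in $k$ of degree at most $3$, so all three sums are standard Faulhaber sums $\sum k$, $\sum k^2$, $\sum k^3$ taken over $1\le k\le n-2$, with no recourse to Lemma \ref{lem:sumharm}.

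Collecting terms, Corollary \ref{cor:YI} then produces a recurrence of the shape $E_Y(\oF_n^2)=\frac{n}{n-1}E_Y(\oF_{n-1}^2)+P(n)$ for an explicit polynomial $P$. Following the same device as before, I would set $x_n=E_Y(\oF_n^2)/n$, which turns the factor $\frac{n}{n-1}$ into an ordinary telescoping recurrence $x_n=x_{n-1}+P(n)/n$; since the target formula contains no harmonic numbers, the potential $1/k$ contributions in $\sum_{k=2}^n P(k)/k$ must cancel, so solving with $x_1=0$ yields a polynomial in $n$, and multiplying back by $n$ gives $E_Y(\oF_n^2)=\frac{1}{12}(13n^4-30n^3+23n^2-6n)$. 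The only real obstacle is the bookkeeping of these degree-$4$ polynomial sums and the attendant risk of arithmetic slips; there is no conceptual difficulty. As a sanity check one can either verify the cancellation of all $1/n$ terms (reflecting the fact that $\oF$ has the cleanest Yule statistics) or, independently, confirm the end result against $\sigma^2_Y(\oF_n)=2\binom{n}{4}$ obtained via $\sigma^2_Y(\oF_n)=E_Y(\oF_n^2)-E_Y(\oF_n)^2$ with $E_Y(\oF_n)=n(n-1)$.
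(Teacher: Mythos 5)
Your proposal is correct and follows essentially the same route as the paper's proof: substitute $E_Y(\oF_k)=k(k-1)$, $R_{\oF}(k)=2k$, $\ee_{\oF}(k,n-1-k)=n-k$ and $f_{\oF}(n-1,1)=\binom{n}{2}+1$ into Corollary~\ref{cor:YI}, reduce all sums to elementary power sums (your difference-of-squares factorization is exactly the paper's summand, since $(n-k)\bigl(2\binom{k+1}{2}+\binom{n-k+1}{2}+\binom{n-k}{2}\bigr)=2(n-k)\bigl(\binom{k+1}{2}+\binom{n-k}{2}\bigr)+(n-k)^2$), obtain the recurrence $E_Y(\oF_n^2)=\frac{n}{n-1}E_Y(\oF_{n-1}^2)+\frac{1}{4}n(13n^2-33n+22)$, and telescope via $x_n=E_Y(\oF_n^2)/n$ with $x_1=0$. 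The only cosmetic difference is your hedge that $1/k$ contributions ``must cancel'': in the actual computation the additive term is divisible by $n$ outright, so no harmonic numbers ever appear.
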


\begin{proof}
$\oF$ is a recursive shape index for phylogenetic trees with
$$
\ee_{\oF}(k,n-k-1)=n-k,\quad
R(k)=2k.
$$
Then, by Corollary \ref{cor:YI},
$$
\begin{array}{l}
\displaystyle  E_Y(\oF_n^2)= \frac{n}{n-1}  E_Y(\oF_{n-1}^2)+\frac{4}{n-1} \sum_{k=1}^{n-2} (n-k)E_Y(\oF_k)+\frac{1}{n-1}\Big(\binom{n}{2}+1\Big)^2\\
\qquad \displaystyle+\frac{4}{n-1}\Big(\binom{n}{2}+1\Big) E_Y(\oF_{n-1}) +\frac{2}{n-1} \sum_{k=1}^{n-2} 2
E_Y(\oF_k)(n-k-1)\\
\qquad \displaystyle+\frac{1}{n-1} \sum_{k=1}^{n-2} \Big(\Big(\binom{k+1}{2}+\binom{n-k+1}{2}\Big)^2- \Big(\binom{k+1}{2}+\binom{n-k}{2}\Big)^2 \Big)
\\
\quad \displaystyle=\frac{n}{n-1} E_Y(\oF_{n-1}^2)\\
\qquad \displaystyle +
\frac{4}{n-1}\sum_{k=1}^{n-2} (n-k)k(k-1)+\frac{4}{n-1}\Big(\binom{n}{2}+1\Big)(n-1)(n-2)\\
\qquad \displaystyle +
\frac{4}{n-1} \sum_{k=1}^{n-2}k(k-1)(n-k-1)+
\frac{1}{n-1}\Big(\binom{n}{2}+1\Big)^2\\
\qquad \displaystyle +
\frac{1}{n-1}\sum_{k=1}^{n-2}(2(n-k)\Big(\binom{k+1}{2}+\binom{n-k}{2}\Big)+(n-k)^2)\\
\quad \displaystyle=\frac{n}{n-1} E_Y(\oF_{n-1}^2)+\frac{1}{4}n(13n^2-33n+22)
\end{array}
$$
Setting $x_n=E_Y(\oF_n^2)/n$, this recurrence becomes
$$
x_n=x_{n-1}+\frac{1}{4}(13n^2-33n+22)
$$
and the solution of this recursive equation with $x_1=0$ is
$$
x_n=\frac{1}{12}(13n^3-30n^2+23n-6)
$$
from where we deduce that
$$
E_Y(\oF_n^2)=nx_n=\frac{1}{12}(13n^4-30n^3+23n^2-6n)
$$
as we claimed. \hspace*{\fill}\qed
\end{proof}

\begin{corollary}\label{th:varoF}
The variance of $\oF_n$ under the Yule model is
$$
\sigma^2_Y(\oF_n)=2\binom{n}{4}
$$
\end{corollary}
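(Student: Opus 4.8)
The plan is to apply the elementary identity $\sigma^2_Y(\oF_n)=E_Y(\oF_n^2)-E_Y(\oF_n)^2$, substituting into it the second moment just established in Theorem~\ref{th:oF} together with the expected value $E_Y(\oF_n)=n(n-1)$ recalled above. Because both ingredients are explicit polynomials in $n$, the remaining task is pure polynomial arithmetic: once Theorem~\ref{th:oF} is granted there is no probabilistic or combinatorial content left to supply.

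Concretely, I would first expand $E_Y(\oF_n)^2=n^2(n-1)^2=n^4-2n^3+n^2$ and rewrite it over the common denominator $12$ as $\tfrac{1}{12}(12n^4-24n^3+12n^2)$, so that it lines up term by term with $E_Y(\oF_n^2)=\tfrac{1}{12}(13n^4-30n^3+23n^2-6n)$. Subtracting coefficient by coefficient then yields $\sigma^2_Y(\oF_n)=\tfrac{1}{12}(n^4-6n^3+11n^2-6n)$.

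The single step that asks for any insight is the final factorisation: I would recognise the quartic $n^4-6n^3+11n^2-6n=n(n^3-6n^2+11n-6)$ as the falling factorial $n(n-1)(n-2)(n-3)$, after which dividing by $12$ gives exactly $2\binom{n}{4}$. I anticipate no real obstacle anywhere in this argument---the whole difficulty has been absorbed into the recurrence solved in Theorem~\ref{th:oF}, and this corollary is essentially a one-line check. Its only noteworthy feature is the agreeable collapse of an apparently unwieldy variance into the strikingly compact closed form $2\binom{n}{4}$, furnishing the promised further instance of $\oF$ admitting simpler Yule-model statistics than $S$ or $\Phi$ individually.
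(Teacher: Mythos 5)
Your proposal is correct and coincides with the paper's own argument, which likewise just substitutes $E_Y(\oF_n^2)=\tfrac{1}{12}(13n^4-30n^3+23n^2-6n)$ from Theorem~\ref{th:oF} and $E_Y(\oF_n)=n(n-1)$ into $\sigma^2_Y(\oF_n)=E_Y(\oF_n^2)-E_Y(\oF_n)^2$. Your arithmetic checks out, including the factorisation $n^4-6n^3+11n^2-6n=n(n-1)(n-2)(n-3)$ giving $2\binom{n}{4}$.
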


\begin{proof}
Simply apply that $\sigma^2_Y(\oF_n)=E_Y(\oF_n^2)-E_Y(\oF_n)^2$.   \hspace*{\fill}\qed
\end{proof}

\section{The variance of Colless' index}

The \emph{Colless index} $C(T)$ of a phylogenetic tree $T\in \TT_n$ is defined as the sum, over all its internal nodes $v$,  of the absolute value of the difference between the number of descendant leaves of $v$'s children. In other words, if for every internal node $v$ we let $v_1,v_2$ denote its children, then 
$$
C(T)=\sum_{v\in V_{int}(T)} |\kappa_T(v_1)-\kappa_T(v_2)|
$$
It is well known (see, for instance, \cite[p. 100]{Rogers:96}) that if $T_k\in \TT(S_k)$,  for some $\emptyset\neq S_k\subsetneq \{1,\ldots,n\}$ with $k$ elements, and $T'_{n-k}\in \TT(\{1,\ldots,n\}\setminus S_k)$, then
$$
C(T_k\widehat{\ }\, T'_{n-k})=C(T_k)+C(T'_{n-k})+|n-2k|.
$$
In particular, it is a recursive shape index with
$$
f_C(a,b)=|b-a|.
$$
We have, then, 
$$
\ee_C(a,b-1)=f_C(a,b)-f_C(a,b-1)=|b-a|-|b-1-a|=\left\{\begin{array}{ll}
1 & \mbox{if $b\geq a+1$}\\
-1 & \mbox{if $b\leq a$}
\end{array}\right.
$$
Let  $C_n$ be the random variable that chooses a tree $T\in\TT_n$ and computes $C(T)$.  Its expected value  under the Yule model
is  \cite{Heard92}
$$
E_Y(C_n)=n(H_{\lfloor \frac{n}{2}\rfloor}-1)+(n \mmod 2)=nH_{\lfloor \frac{n}{2}\rfloor}-2\Big\lfloor \frac{n}{2}\Big\rfloor.
$$
In particular, $E_Y(C_1)=0$ and $E_Y(C_n)$ satisfies the recurrence
$$
E_Y(C_{n+1})=E_Y(C_n)+ H_{\lfloor \frac{n}{2}\rfloor}.
$$
Indeed 
$$
\begin{array}{rl}
E_Y(C_{n+1})-E_Y(C_n)  & =
(n+1)(H_{\lfloor \frac{n+1}{2}\rfloor}-1)+((n+1) \mmod 2)\\ & \qquad -
n(H_{\lfloor \frac{n}{2}\rfloor}-1)-(n \mmod 2) =(*)
\end{array}
$$
Now we distinguish two cases, depending on the parity of $n$
\begin{itemize}
\item If $n$ is even
$$
(*) = (n+1)(H_{\frac{n}{2}}-1)+1-n(H_{\frac{n}{2}}-1)=H_{\frac{n}{2}}=H_{\lfloor \frac{n}{2}\rfloor}
$$

\item If $n$ is odd
$$
\begin{array}{rl}
(*) & = (n+1)(H_{\frac{n+1}{2}}-1) -n(H_{\frac{n-1}{2}}-1)-1\\ & \displaystyle =
(n+1)\Big(H_{\frac{n-1}{2}}+\frac{2}{n+1}\Big)-nH_{\frac{n-1}{2}}-2=
H_{\frac{n-1}{2}}=H_{\lfloor \frac{n}{2}\rfloor}
\end{array}
$$
\end{itemize}

In this section we prove that the variance of $C_n$ under this model is (see Cor. \ref{th:varC})
$$
  \begin{array}{l}
\sigma_Y^2(C_n^2)  \displaystyle  =\frac{5n^2+7n}{2}+(6n+1)\Big\lfloor\frac{n}{2}\Big\rfloor
-4\Big\lfloor\frac{n}{2}\Big\rfloor^2+8\Big\lfloor\frac{n+2}{4}\Big\rfloor^2\\
\qquad\displaystyle -8(n+1)\Big\lfloor\frac{n+2}{4}\Big\rfloor
-6nH_n+\Big(2\Big \lfloor\frac{n}{2}\Big\rfloor - n(n-3)\Big)H_{\lfloor\frac{n}{2}\rfloor}\\
\qquad\displaystyle -n^2H_{\lfloor\frac{n}{2}\rfloor}^{(2)}+\Big(n^2+3n-2\Big\lfloor\frac{n}{2}\Big\rfloor\Big)H_{\lfloor\frac{n+2}{4}\rfloor}-2nH_{\lfloor\frac{n}{4}\rfloor}
\end{array}
$$

\begin{theorem}\label{th:S}
 $$
 \displaystyle \begin{array}{l}
E_Y(C_n^2)\displaystyle  =\frac{5n^2+7n}{2}+(6n+1)\Big\lfloor\frac{n}{2}\Big\rfloor
-4\Big\lfloor\frac{n}{2}\Big\rfloor^2+8\Big\lfloor\frac{n+2}{4}\Big\rfloor^2\\
\qquad\displaystyle +n^2(H_{\lfloor\frac{n}{2}\rfloor}^2-H_{\lfloor\frac{n}{2}\rfloor}^{(2)})-6nH_n+\Big(3n-n^2-(4n-2)\Big\lfloor\frac{n}{2}\Big\rfloor\Big)H_{\lfloor\frac{n}{2}\rfloor}\\
\qquad\displaystyle +\Big(n^2+3n-2\Big\lfloor\frac{n}{2}\Big\rfloor\Big)H_{\lfloor\frac{n+2}{4}\rfloor}-2nH_{\lfloor\frac{n}{4}\rfloor}
\end{array}
$$
\end{theorem}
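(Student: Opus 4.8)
The plan is to follow the strategy already used in the proofs of Theorems \ref{th:S2}, \ref{th:phi2} and \ref{th:oF}: feed the data of $C$ as a recursive shape index into Corollary \ref{cor:YI}, simplify the resulting sums, reduce everything to a first-order recurrence for $x_n=E_Y(C_n^2)/n$, and solve it with $x_1=0$. The inputs are $f_C(a,b)=|b-a|$ (so that $f_C(n-1,1)=n-2$), the step $\ee_C(k,n-1-k)=|n-2k|-|n-1-2k|$, the difference $R_C(k)=H_{\lfloor k/2\rfloor}$, and the closed form $E_Y(C_k)=kH_{\lfloor k/2\rfloor}-2\lfloor k/2\rfloor$ recalled above. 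Two of the six contributions in Corollary \ref{cor:YI} are immediate: the final telescoping sum collapses because $f_C(k,n-k)^2-f_C(k,n-k-1)^2=(n-2k)^2-(n-1-2k)^2=2n-1-4k$, whose sum over $1\le k\le n-2$ equals just $n-2$, while the terms $f_C(n-1,1)^2/(n-1)=(n-2)^2/(n-1)$ and $\tfrac{4(n-2)}{n-1}E_Y(C_{n-1})$ are plain polynomial-plus-harmonic expressions.

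The first genuine difficulty is the $\ee_C$ sum. A short parity check shows $\ee_C(k,n-1-k)=+1$ for $1\le k\le\lfloor(n-1)/2\rfloor$ and $=-1$ for $\lceil n/2\rceil\le k\le n-2$, with no gap or overlap for either parity of $n$. I would therefore write $\sum_{k=1}^{n-2}\ee_C(k,n-1-k)E_Y(C_k)=\sum_{k=1}^{n-2}E_Y(C_k)-2\sum_{k=\lceil n/2\rceil}^{n-2}E_Y(C_k)$ and substitute $E_Y(C_k)=kH_{\lfloor k/2\rfloor}-2\lfloor k/2\rfloor$. The sums of $kH_{\lfloor k/2\rfloor}$ are then handled by Lemma \ref{lem:sumharm}(9) and the sums of $\lfloor k/2\rfloor$ are elementary; evaluating (9) at the two upper limits $n-2$ and $\lceil n/2\rceil-1$ is where the nested floors $\lfloor n/4\rfloor$ and $\lfloor(n+2)/4\rfloor$ first enter the computation.

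The main obstacle is the convolution $\tfrac{2}{n-1}\sum_{k=1}^{n-2}E_Y(C_k)H_{\lfloor(n-k-1)/2\rfloor}$, that is, the two pieces $\sum_k kH_{\lfloor k/2\rfloor}H_{\lfloor(n-k-1)/2\rfloor}$ and $\sum_k\lfloor k/2\rfloor\,H_{\lfloor(n-k-1)/2\rfloor}$. Neither matches the convolution identities Lemma \ref{lem:sumharm}(7),(8) directly, because both indices are halved and shifted. My approach is to split each sum according to the parities of $k$ and of $n$, turning every floor into an honest integer index $j$; each resulting inner sum then becomes a true convolution $\sum_j H_jH_{m-j}$ or $\sum_j jH_jH_{m-j}$ to which (7) and (8) apply, while the leftover single sums reduce to (1)--(3) and, crucially, to the odd-denominator identity (10), which absorbs the $H_{2\,\cdot}$-type contributions that arise from re-indexing. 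This bookkeeping, carried out separately for $n$ even and $n$ odd, is the long and delicate heart of the proof.

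Finally I would collect all six contributions into the recurrence $x_n=x_{n-1}+g(n)$, where dividing by $n$ turns $\tfrac{n}{n-1}E_Y(C_{n-1}^2)$ into $x_{n-1}$ exactly and $g(n)$ is an explicit expression in $n$, $H_n$, $H_{\lfloor n/2\rfloor}$, $H_{\lfloor n/2\rfloor}^{(2)}$, $H_{\lfloor(n+2)/4\rfloor}$, $H_{\lfloor n/4\rfloor}$ and the various floors. Solving by telescoping, $x_n=\sum_{k=2}^n g(k)$, mixes the parities once more, so (9) and (10) reappear and the fourfold-periodic floors $\lfloor n/4\rfloor$ and $\lfloor(n+2)/4\rfloor$ settle into their final places; multiplying by $n$ yields the stated formula for $E_Y(C_n^2)$. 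I expect the remaining pitfalls to be purely organizational, namely keeping the even/odd case analysis consistent at the summation boundaries, rather than conceptual.
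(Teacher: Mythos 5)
Your proposal is correct and follows essentially the same route as the paper: Corollary \ref{cor:YI} with the data $f_C(a,b)=|b-a|$, $\ee_C=\pm1$, $R_C(k)=H_{\lfloor k/2\rfloor}$, parity-splitting the convolution $\sum_k E_Y(C_k)H_{\lfloor (n-k-1)/2\rfloor}$ into honest convolutions handled by Lemma \ref{lem:sumharm}(7),(8), and telescoping the first-order recurrence for $E_Y(C_n^2)/n$ (the paper's $z_n$), with identities (9) and (10) and the mod-$4$ case analysis entering exactly where you predict. The only deviation is cosmetic: you evaluate the $\ee_C$-sum directly by splitting at $k=\lceil n/2\rceil$ and applying Lemma \ref{lem:sumharm}(9) at two upper limits, whereas the paper reaches the same closed form by differencing $\ee_C(k,n-k-1)-\ee_C(k,n-k-2)$ and iterating an auxiliary two-step recurrence, an equivalent computation.
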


\begin{proof}
As we have mentioned, Colless' index satisfies the hypotheses in Corollary \ref{cor:YI} with 
$$
\begin{array}{c}
R_C(n)=H_{\lfloor \frac{n}{2}\rfloor},\ f_C(k,n-k)=|n-2k|, \\[2ex] \ee_C(k,n-k-1)=\left\{\begin{array}{ll}
1 & \mbox{ if $2k\leq n-1$}\\
-1 & \mbox{ if $2k>n-1$}
\end{array}\right.
\end{array}
$$
Therefore, by the aforementioned lemma, for $n\geq 2$
$$
\begin{array}{l}
\displaystyle E_Y(C_n^2)
= \frac{n}{n-1}  E_Y(C_{n-1}^2)
+\frac{4}{n-1} \sum_{k=1}^{n-2} \ee_C(k,n-1-k) E_Y(C_k)\\
\quad\qquad \displaystyle+\frac{4}{n-1} (n-2)E_Y(C_{n-1}) +\frac{2}{n-1} \sum_{k=1}^{n-2} 
E_Y(C_k)H_{\lfloor \frac{n-k-1}{2}\rfloor}\\
\quad\qquad \displaystyle +\frac{(n-2)^2}{n-1}+\frac{1}{n-1} \sum_{k=1}^{n-2} \Big((n-2k)^2-(n-2k-1)^2 \Big)
\\[2ex]
\quad \displaystyle 
= 
\frac{n}{n-1}  E_Y(C_{n-1}^2)+\frac{4}{n-1} \sum_{k=1}^{n-2} \ee_C(k,n-1-k) E_Y(C_k)\\
\quad\qquad \displaystyle +4(n-2)H_{\lfloor \frac{n-1}{2}\rfloor}+ \frac{4(n-2)}{n-1}((n-1) \mmod 2)\\
\quad\qquad \displaystyle
+\frac{2}{n-1} \sum_{k=1}^{n-2} 
E_Y(C_k)H_{\lfloor \frac{n-k-1}{2}\rfloor} -3(n-2)
\end{array}
$$
We need to compute now 
$$
X_n= \frac{4}{n-1} \sum_{k=1}^{n-2} \ee_C(k,n-1-k) E_Y(C_k),\
Y_n=\frac{2}{n-1} \sum_{k=1}^{n-2} E_Y(C_k)H_{\lfloor \frac{n-k-1}{2}\rfloor}
$$
 In both of them, we shall have to distinguish several cases, depending on the parity of $n$.

\begin{enumerate}[(1)]
\item Set $X_n=\displaystyle \frac{4}{n-1} \sum_{k=1}^{n-2} \ee_C(k,n-1-k) E_Y(C_k)$. Notice that
$$
\begin{array}{l}
\ee_C(k,n-k-1)=\left\{\begin{array}{ll}
1 & \mbox{ if $k\leq (n-1)/2$}\\
-1 & \mbox{ if $k> (n-1)/2$}
\end{array}\right.\\[2ex]
\ee_C(k,n-k-2)=\left\{\begin{array}{ll}
1 & \mbox{ if $k\leq (n-2)/2$}\\
-1 & \mbox{ if $k> (n-2)/2$}
\end{array}\right.
\end{array}
$$

Now, on the one hand, if $n-1$ is even,
$$
\ee_C(k,n-k-1)-\ee_C(k,n-k-2)=\left\{\begin{array}{ll}
0 & \mbox{ if $k< (n-1)/2$}\\
2 & \mbox{ if $k= (n-1)/2$}\\
0 & \mbox{ if $k> (n-1)/2$}
\end{array}\right.
$$
Then, 
$$
\begin{array}{rl}
X_n & \displaystyle =\frac{n-2}{n-1}\cdot \frac{4}{n-2} \sum_{k=1}^{n-3} \ee_C(k,n-k-1) E_Y(C_k)\\ 
& \displaystyle\qquad\qquad + \frac{4}{n-1} \ee_C(n-2,1)E_Y(C_{n-2})\\
 & \displaystyle=
\frac{n-2}{n-1} \cdot\frac{4}{n-2} \sum_{k=1}^{n-3} \ee_C(k,n-k-2) E_Y(C_k)+
\frac{8}{n-1}E_Y(C_{\frac{n-1}{2}}) \\ 
& \displaystyle\qquad-\frac{4}{n-1}E_Y(C_{n-2})\\
 & \displaystyle=
\frac{n-2}{n-1}X_{n-1} +
\frac{8}{n-1}\Big(\frac{n-1}{2}(H_{\lfloor \frac{n-1}{4}\rfloor}-1)+\Big(\frac{n-1}{2} \mmod 2\Big)\Big)\\
 & \qquad\qquad \displaystyle -\frac{4}{n-1}\big((n-2)(H_{\lfloor \frac{n-2}{2}\rfloor}-1)+((n-2)\mmod 2)\big)\\
 & \displaystyle=
\frac{n-2}{n-1}X_{n-1} +4H_{\lfloor \frac{n-1}{4}\rfloor}-\frac{4(n-2)}{n-1}H_{ \frac{n-3}{2}}\\
 & \qquad\qquad \displaystyle+
\frac{8}{n-1}\Big(\Big(\frac{n-1}{2} \mmod 2\Big)-1\Big)
\end{array}
$$

On the other hand, if $n-1$ is odd, then
$$
\ee_C(k,n-k-1)=\ee_C(k,n-k-2)\quad \mbox{for every $k=1,\ldots,n-3$}
$$
and therefore
$$
\begin{array}{rl}
X_n & \displaystyle =\frac{n-2}{n-1}\cdot \frac{4}{n-2} \sum_{k=1}^{n-3} \ee_C(k,n-k-1)\cdot E_Y(C_k)\\
 & \qquad\qquad \displaystyle +
\frac{4}{n-1} \ee_C(n-2,1)E_Y(C_{n-2})\\
 & \displaystyle=
\frac{n-2}{n-1} \cdot\frac{4}{n-2} \sum_{k=1}^{n-3} \ee_C(k,n-k-2)\cdot E_Y(C_k)\\
 & \qquad\qquad \displaystyle - \frac{4}{n-1} \big((n-2)(H_{\lfloor \frac{n-2}{2}\rfloor}-1)+((n-2)\mmod 2)\big)\\
 & \displaystyle=
\frac{n-2}{n-1}X_{n-1} -\frac{4(n-2)}{n-1} (H_{ \frac{n-2}{2}}-1)
\end{array}
$$
Setting $x_n=(n-1)X_n$, we have
$$
x_n=\left\{\begin{array}{ll}x_{n-1}
 - 4(n-2) (H_{ \frac{n-2}{2}}-1)&\ \mbox{ if $n$ is even}
\\
x_{n-1}+4(n-1)H_{\lfloor \frac{n-1}{4}\rfloor}- 4(n-2) H_{ \frac{n-3}{2}} &\\
 \qquad\qquad +
8\big((\frac{n-1}{2} \mmod 2)-1\big)
&\ \mbox{ if $n$ is odd}\\
\end{array}\right.
$$
Iterating these recurrences we obtain that:
\begin{itemize}
\item If $n$ is even
$$
\begin{array}{rl} 
x_n& = x_{n-2}+4(n-2)H_{\lfloor \frac{n-2}{4}\rfloor}-4(n-3)H_{ \frac{n-4}{2}}\\
 & \qquad\qquad +8\big((\frac{n-2}{2} \mmod 2)-1\big)-4(n-2)(H_{ \frac{n-2}{2}}-1)\\
& = x_{n-2}+4(n-2)H_{\lfloor \frac{n-2}{4}\rfloor}-4(n-3)H_{ \frac{n-4}{2}}\\
 & \qquad\qquad +8\big((\frac{n-2}{2} \mmod 2)-1\big)-4(n-2)(H_{ \frac{n-4}{2}}+\frac{2}{n-2}-1)\\
& = x_{n-2}+4(n-2)H_{\lfloor \frac{n-2}{4}\rfloor}-4(2n-5)H_{ \frac{n-4}{2}}+4(n-6)\\
 & \qquad\qquad +8(\frac{n-2}{2} \mmod 2)
\end{array}
$$

\item If $n$ is odd
$$
\begin{array}{rl}
x_n & = x_{n-2}-4(n-3)(H_{ \frac{n-3}{2}}-1)+4(n-1)H_{\lfloor \frac{n-1}{4}\rfloor}\\
 & \qquad\qquad -4(n-2)H_{ \frac{n-3}{2}} +8\big((\frac{n-1}{2} \mmod 2)-1\big)\\
 & = x_{n-2}+4(n-1)H_{\lfloor \frac{n-1}{4}\rfloor}-4(2n-5)H_{ \frac{n-3}{2}}+4(n-5)\\
 & \qquad\qquad+8(\frac{n-1}{2} \mmod 2)
\end{array}
$$
\end{itemize}
From these recurrences, and noticing that $x_1=x_2=0$, we obtain that, for every $m\geq 1$
$$
\begin{array}{l}
x_{2m}  \displaystyle = \sum_{k=1}^{m-1}\Big(8kH_{\lfloor \frac{k}{2}\rfloor}-4(4k-1)H_{k-1}+8(k-2)+8(k  \mmod 2)\Big)\\
 \quad \displaystyle = 8\sum_{k=1}^{m-1} kH_{\lfloor \frac{k}{2}\rfloor} 
 -16\sum_{k=1}^{m-1} (k-1) H_{k-1}-12\sum_{k=1}^{m-1}  H_{k-1}+
 8\sum_{k=1}^{m-1}(k-2)\\
 \qquad\quad\displaystyle +8\sum_{k=1}^{m-1} (k  \mmod 2)
\end{array}
$$
$$
\begin{array}{l}
\quad \displaystyle = 8\sum_{k=1}^{m-1}kH_{\lfloor \frac{k}{2}\rfloor}
 -16\sum_{k=1}^{m-2} k  H_{k}-12\sum_{k=1}^{m-2}  H_{k}+
 4(m-1)(m-4)+8\Big\lfloor \frac{m}{2} \Big\rfloor\\
\quad \displaystyle = 
4m (m-1) H_{\lfloor \frac{m}{2} \rfloor}-8\Big\lfloor \frac{m}{2} \Big\rfloor \Big(\Big\lfloor \frac{m}{2} \Big\rfloor -1 \Big)-4(m-1)(2m-1) H_{m-1}
 \\
 \displaystyle \quad\qquad +4(m-1)(2m-3)
\\[2ex]
x_{2m+1} \displaystyle  = \sum_{k=1}^{m} \Big(8kH_{\lfloor \frac{k}{2}\rfloor}-4(4k-3)H_{k-1}+8(k-2)+8(k  \mmod 2)\Big)\\
\quad \displaystyle = 8\sum_{k=1}^{m}kH_{\lfloor \frac{k}{2}\rfloor}
-4\sum_{k=1}^{m}(4k-3)H_{k-1}+8\sum_{k=1}^{m}(k-2)+8\sum_{k=1}^{m}(k  \mmod 2)\\
\quad \displaystyle = 8\sum_{k=1}^{m}kH_{\lfloor \frac{k}{2}\rfloor}
-16\sum_{k=1}^{m-1}kH_{k}-4\sum_{k=1}^{m-1} H_{k}+4m(m-3)+8\Big\lfloor \frac{m+1}{2} \Big\rfloor\\
\quad \displaystyle = 4m (m+1) H_{\lfloor \frac{m+1}{2} \rfloor}-8\Big\lfloor \frac{m+1}{2} \Big\rfloor \Big(\Big\lfloor \frac{m+1}{2} \Big\rfloor -1 \Big)-4 m (2 m-1) H_m
\\
 \quad\qquad\displaystyle+4 m( 2 m-3)
\end{array}
$$
Both formulas correspond to:
$$
\begin{array}{rl}
x_n & \displaystyle =4\Big \lfloor \frac{n+1}{2}\Big \rfloor \Big(\Big \lfloor \frac{n+1}{2}\Big \rfloor-1)\Big) H_{\lfloor \frac{n+1}{4}\rfloor}-2(n-1)(n-2) H_{\lfloor \frac{n-1}{2}\rfloor}\\[2ex] & \qquad\qquad \displaystyle -8\Big\lfloor \frac{n+1}{4}\Big\rfloor\Big(\Big\lfloor\frac{n+1}{4}\Big\rfloor-1\Big)+
4\Big \lfloor \frac{n-1}{2}\Big \rfloor \Big(2\Big \lfloor \frac{n}{2}\Big \rfloor-3 \Big) 
\end{array}
$$
Finally,
$$
\begin{array}{rl}
X_n & =\displaystyle\frac{1}{n-1} x_n\\[2ex]
& \displaystyle = \frac{1}{n-1}\Big\{4\Big \lfloor \frac{n+1}{2}\Big \rfloor \Big(\Big \lfloor \frac{n+1}{2}\Big \rfloor-1)\Big) H_{\lfloor \frac{n+1}{4}\rfloor}-2(n-1)(n-2) H_{\lfloor \frac{n-1}{2}\rfloor}\\[1.5ex]  &\qquad\qquad \displaystyle -8\Big\lfloor \frac{n+1}{4}\Big\rfloor\Big(\Big\lfloor\frac{n+1}{4}\Big\rfloor-1\Big)+
4\Big \lfloor \frac{n-1}{2}\Big \rfloor \Big(2\Big \lfloor \frac{n}{2}\Big \rfloor-3 \Big)\Big\}
\end{array}
$$

\item Set 
$$
y_n=\sum_{k=1}^{n-2} 
E_Y(C_k)H_{\lfloor \frac{n-k-1}{2}\rfloor}=\sum_{k=1}^{n-2} (k(H_{\lfloor \frac{k}{2}\rfloor}-1)+(k\mmod 2))H_{\lfloor \frac{n-k-1}{2}\rfloor}
$$
so that $Y_n=2y_n/(n-1)$.
If $n=2m$, then
$$
\begin{array}{l}
y_{2m}   \displaystyle = \sum_{j=1}^{m-1} 2j(H_j-1)H_{m-j-1}+ \sum_{j=0}^{m-2} ((2j+1)(H_j-1)+1)H_{m-j-1}\\
\quad =\displaystyle \sum_{j=1}^{m-2} 2j(H_j-1)H_{m-j-1}+ \sum_{j=1}^{m-2} (2j(H_j-1)+H_j)H_{m-j-1}\\
\quad =\displaystyle 4\sum_{j=1}^{m-2} jH_jH_{m-j-1}+ \sum_{j=1}^{m-2} H_j H_{m-j-1}-4\sum_{j=1}^{m-2} jH_{m-j-1}\\
\quad =\displaystyle 4\sum_{j=1}^{m-2} jH_jH_{m-j-1}+ \sum_{j=1}^{m-2} H_j H_{m-j-1}-4\sum_{j=1}^{m-2} (m-1-j)H_{j}\end{array}
$$
$$
\begin{array}{l}
\quad =\displaystyle 4\sum_{j=1}^{m-2} jH_jH_{m-j-1}+ \sum_{j=1}^{m-2} H_j H_{m-j-1}-4(m-1)\sum_{j=1}^{m-2} H_{j}+4\sum_{j=1}^{m-2} jH_{j}\\
\quad \displaystyle = 4\binom{m}{2}(H_m^2-H_m^{(2)}-2H_{m}+2)+m(H_m^2-H_m^{(2)}-2H_{m}+2)\\
\quad\qquad \displaystyle -4(m-1)^2(H_{m-1}-1)+(m-1)(m-2)(2H_{m-1}-1)\\
\quad \displaystyle  =m(2m-1)(H_m^2-H_m^{(2)})-2m(3m-2)H_{m}+m(7m-5)
\end{array}
$$
If $n=2m+1$, then
$$
\begin{array}{l}
y_{2m+1}   \displaystyle = \sum_{j=1}^{m-1} 2j(H_j-1)H_{m-j}+ \sum_{j=0}^{m-1} ((2j+1)(H_j-1)+1)H_{m-j-1}\\
\quad =\displaystyle 2\sum_{j=1}^{m-1} jH_jH_{m-j}-2\sum_{j=1}^{m-1} jH_{m-j}+ 2\sum_{j=1}^{m-2} jH_j H_{m-j-1}\\
\quad \displaystyle\qquad -2 \sum_{j=1}^{m-2} jH_{m-j-1}+ \sum_{j=1}^{m-2} H_jH_{m-j-1}\\
\quad =\displaystyle 2\sum_{j=1}^{m-1} jH_jH_{m-j}+ 2\sum_{j=1}^{m-2} jH_j H_{m-j-1}+ \sum_{j=1}^{m-2} H_jH_{m-j-1}\\
\quad \displaystyle\qquad 
-2\sum_{j=1}^{m-1} (m-j)H_{j}-2 \sum_{j=1}^{m-2} (m-1-j)H_{j}\\
\quad =\displaystyle 2\sum_{j=1}^{m-1} jH_jH_{m-j}+ 2\sum_{j=1}^{m-2} jH_j H_{m-j-1}+ \sum_{j=1}^{m-2} H_jH_{m-j-1}\\
\quad \displaystyle\qquad 
-2(2m-1)\sum_{j=1}^{m-2}  H_{j}+4 \sum_{j=1}^{m-2} jH_{j}-2H_{m-1}\\
\end{array}
$$
$$
\begin{array}{l}
\quad =\displaystyle  2\binom{m+1}{2}(H_{m+1}^2-H_{m+1}^{(2)}-2H_{m+1}+2)\\
\quad \displaystyle\qquad+2\binom{m}{2}(H_m^2-H_m^{(2)}-2H_{m}+2)\\
\quad \displaystyle\qquad +m(H_m^2-H_m^{(2)}-2H_{m}+2)-2(2m-1)(m-1)(H_{m-1}-1)\\
\quad \displaystyle\qquad +(m-1)(m-2)(2H_{m-1}-1)-2H_{m-1}\\
\quad =\displaystyle  m(2m+1)(H_{m}^2-H_{m}^{(2)})-6m^2H_{m}+m(7m-1)\hspace*{2cm}
\end{array}
$$
Both formulas can be summarized into
$$
\begin{array}{l}
\displaystyle y_n  = \Big\lfloor \frac{n}{2}\Big\rfloor\Big(2\Big\lfloor \frac{n-1}{2}\Big\rfloor+1\Big)(H_{\lfloor \frac{n}{2}\rfloor}^2-H_{\lfloor \frac{n}{2}\rfloor}^{(2)})\\
\displaystyle\qquad-2\Big\lfloor \frac{n}{2}\Big\rfloor\Big(\Big\lfloor \frac{n}{2}\Big\rfloor+2\Big\lfloor \frac{n-1}{2}\Big\rfloor\Big)H_{\lfloor \frac{n}{2}\rfloor}+\Big\lfloor \frac{n}{2}\Big\rfloor\Big(3\Big\lfloor \frac{n}{2}\Big\rfloor +4\Big\lfloor \frac{n-1}{2}\Big\rfloor-1\Big)
\end{array}
$$
and therefore
$$
\begin{array}{l}
\displaystyle Y_n=\frac{2}{n-1}\Big\{\Big\lfloor \frac{n}{2}\Big\rfloor\Big(2\Big\lfloor \frac{n-1}{2}\Big\rfloor+1\Big)(H_{\lfloor \frac{n}{2}\rfloor}^2-H_{\lfloor \frac{n}{2}\rfloor}^{(2)})\\
\displaystyle\qquad-2\Big\lfloor \frac{n}{2}\Big\rfloor\Big(\Big\lfloor \frac{n}{2}\Big\rfloor+2\Big\lfloor \frac{n-1}{2}\Big\rfloor\Big)H_{\lfloor \frac{n}{2}\rfloor}+\Big\lfloor \frac{n}{2}\Big\rfloor\Big(3\Big\lfloor \frac{n}{2}\Big\rfloor +4\Big\lfloor \frac{n-1}{2}\Big\rfloor-1\Big)\Big\}
\end{array}
$$
\end{enumerate}

We can return now to our recursive formula for $E_Y(C_n^2)$, which now becomes
$$
\begin{array}{l}
\displaystyle E_Y(C_n^2) 
= \frac{n}{n-1}  E_Y(C_{n-1}^2)+
4(n-2)H_{\lfloor \frac{n-1}{2}\rfloor}-3(n-2)\\
\qquad \displaystyle+ \frac{4(n-2)}{n-1}((n-1) \mmod 2)\\
\qquad \displaystyle+ \frac{1}{n-1}\Big\{4\Big \lfloor \frac{n+1}{2}\Big \rfloor \Big(\Big \lfloor \frac{n+1}{2}\Big \rfloor-1)\Big) H_{\lfloor \frac{n+1}{4}\rfloor}-2(n-1)(n-2) H_{\lfloor \frac{n-1}{2}\rfloor}\\  \qquad\qquad \displaystyle -8\Big\lfloor \frac{n+1}{4}\Big\rfloor\Big(\Big\lfloor\frac{n+1}{4}\Big\rfloor-1\Big)+
4\Big \lfloor \frac{n-1}{2}\Big \rfloor \Big(2\Big \lfloor \frac{n}{2}\Big \rfloor-3 \Big)\Big\}\\
 \qquad \displaystyle+\frac{2}{n-1}\Big\{\Big\lfloor \frac{n}{2}\Big\rfloor\Big(2\Big\lfloor \frac{n-1}{2}\Big\rfloor+1\Big)(H_{\lfloor \frac{n}{2}\rfloor}^2-H_{\lfloor \frac{n}{2}\rfloor}^{(2)})\\
\displaystyle\qquad\qquad-2\Big\lfloor \frac{n}{2}\Big\rfloor\Big(\Big\lfloor \frac{n}{2}\Big\rfloor+2\Big\lfloor \frac{n-1}{2}\Big\rfloor\Big)H_{\lfloor \frac{n}{2}\rfloor}+\Big\lfloor \frac{n}{2}\Big\rfloor\Big(3\Big\lfloor \frac{n}{2}\Big\rfloor +4\Big\lfloor \frac{n-1}{2}\Big\rfloor-1\Big)\Big\}\\

\quad \displaystyle = \frac{n}{n-1}  E_Y(C_{n-1}^2)+
(n-2)(2H_{\lfloor \frac{n-1}{2}\rfloor}-3)\\
\qquad \displaystyle+ \frac{1}{n-1}\Big\{4\Big \lfloor \frac{n+1}{2}\Big \rfloor \Big(\Big \lfloor \frac{n+1}{2}\Big \rfloor-1)\Big) H_{\lfloor \frac{n+1}{4}\rfloor}  -8\Big\lfloor \frac{n+1}{4}\Big\rfloor\Big(\Big\lfloor\frac{n+1}{4}\Big\rfloor-1\Big) \\
 \qquad \qquad \displaystyle-12\Big \lfloor \frac{n-1}{2}\Big \rfloor+2\Big\lfloor \frac{n}{2}\Big\rfloor\Big(2\Big\lfloor \frac{n-1}{2}\Big\rfloor+1\Big)(H_{\lfloor \frac{n}{2}\rfloor}^2-H_{\lfloor \frac{n}{2}\rfloor}^{(2)}) \\
 \qquad \qquad \displaystyle
 -4\Big\lfloor \frac{n}{2}\Big\rfloor\Big(\Big\lfloor \frac{n}{2}\Big\rfloor+2\Big\lfloor \frac{n-1}{2}\Big\rfloor\Big)H_{\lfloor \frac{n}{2}\rfloor}+2\Big\lfloor \frac{n}{2}\Big\rfloor\Big(3\Big\lfloor \frac{n}{2}\Big\rfloor +8\Big\lfloor \frac{n-1}{2}\Big\rfloor-1\Big)\\
\displaystyle\qquad\qquad+ 4(n-2)\big((n-1) \mmod 2\big)\Big\}
\end{array}
$$
Setting $z_n=E_Y(C_n^2)/n$, this identity becomes
$$
\begin{array}{l}
\displaystyle z_n
= z_{n-1}+
\frac{n-2}{n}(2H_{\lfloor \frac{n-1}{2}\rfloor}-3)\\
\qquad \displaystyle+ \frac{1}{n(n-1)}\Big\{4\Big \lfloor \frac{n+1}{2}\Big \rfloor \Big(\Big \lfloor \frac{n+1}{2}\Big \rfloor-1)\Big) H_{\lfloor \frac{n+1}{4}\rfloor}  -8\Big\lfloor \frac{n+1}{4}\Big\rfloor\Big(\Big\lfloor\frac{n+1}{4}\Big\rfloor-1\Big) \\
 \qquad \qquad \displaystyle-12\Big \lfloor \frac{n-1}{2}\Big \rfloor+2\Big\lfloor \frac{n}{2}\Big\rfloor\Big(2\Big\lfloor \frac{n-1}{2}\Big\rfloor+1\Big)(H_{\lfloor \frac{n}{2}\rfloor}^2-H_{\lfloor \frac{n}{2}\rfloor}^{(2)}) \\
 \qquad \qquad \displaystyle
 -4\Big\lfloor \frac{n}{2}\Big\rfloor\Big(\Big\lfloor \frac{n}{2}\Big\rfloor+2\Big\lfloor \frac{n-1}{2}\Big\rfloor\Big)H_{\lfloor \frac{n}{2}\rfloor}+2\Big\lfloor \frac{n}{2}\Big\rfloor\Big(3\Big\lfloor \frac{n}{2}\Big\rfloor +8\Big\lfloor \frac{n-1}{2}\Big\rfloor-1\Big)\\
\displaystyle\qquad\qquad+ 4(n-2)\big((n-1) \mmod 2\big)\Big\}
\end{array}
$$
Writing this equation as $z_n=z_{n-1}+f(n)$, its solution with $z_1=0$ is
$$
z_n=\sum_{k=2}^n f(k)
$$
and it remains to compute this sum.
To do it, we split it into eight terms,
$$
\begin{array}{l}
\displaystyle S_1(n) = \sum_{k=2}^n \frac{k-2}{k}\left(2 H_{\lfloor \frac{k-1}{2}\rfloor}-3\right)
\\
\displaystyle S_2(n)= 4\sum_{k=2}^n \frac{1}{k(k-1)} \Big\lfloor\frac{k+1}{2}\Big\rfloor \left(\Big\lfloor\frac{k+1}{2}\Big\rfloor -1\right) H_{\lfloor\frac{k+1}{4}\rfloor}
\\
\displaystyle S_3(n) = 8\sum_{k=2}^n \frac{1}{k(k-1)} \Big\lfloor\frac{k+1}{4}\Big\rfloor \left(\Big\lfloor\frac{k+1}{4}\Big\rfloor -1\right)
\end{array}
$$
$$
\begin{array}{l}
\displaystyle S_4(n)= 12 \sum_{k=2}^n \frac{1}{k(k-1)} \Big\lfloor\frac{k-1}{2}\Big\rfloor 
\\
\displaystyle S_5(n)=2 \sum_{k=2}^n \frac{1}{k (k-1)} \Big\lfloor\frac{k}{2}\Big\rfloor \left( 2\Big\lfloor \frac{k-1}{2}\Big\rfloor +1\right) \left( H_{\lfloor\frac{k}{2}\rfloor}^2 -H_{\lfloor\frac{k}{2}\rfloor}^{(2)}\right)
\\
\displaystyle S_6(n) = 4 \sum_{k=2}^n \frac{1}{k (k-1)} \Big\lfloor\frac{k}{2}\Big\rfloor \left(\Big\lfloor\frac{k}{2}\Big\rfloor +2\Big \lfloor \frac{k-1}{2}\Big\rfloor\right) H_{\lfloor\frac{k}{2}\rfloor}
\\
\displaystyle S_7(n) = 2 \sum_{k=2}^n \frac{1}{k (k-1)} \Big\lfloor \frac{k}{2}\Big\rfloor \left( 3 \Big\lfloor \frac{k}{2}\Big\rfloor + 8 \Big\lfloor \frac{k-1}{2}\Big\rfloor -1 \right)
\\
\displaystyle S_8(n) = 4\sum_{k=2}^n \frac{1}{k (k-1)} (k-2) ((k-1) \mmod 2)
\end{array}
$$
in such a way that
$$
z_n=S_1(n)+S_2(n)-S_3(n)-S_4(n)+S_5(n)-S_6(n)+S_7(n)+S_8(n).
$$
Now, we compute each one of these sums. To simplify the results, set
\[
\nosum_m = \sum_{l=1}^{m-1} \frac{H_l}{2l+1}.
\]

\noindent\textbf{\emph{Sum~$S_1$}}.
We consider  two cases, depending on the parity of $n$.
 If $n$ is even, say $n=2m$, then
$$
\begin{array}{l}
\displaystyle S_1(2m) =
\sum_{l=1}^m \frac{l-1}{l}\left(2 H_{l-1}-3\right) +
\sum_{l=1}^{m-1} \frac{2l-1}{2l+1}\left(2 H_{l}-3\right) \\
\displaystyle\quad = \sum_{l=1}^{m-1} \left(2-\frac{1}{l+1}-\frac{2}{2l+1}\right)\left(2 H_{l}-3\right) \\
\displaystyle\quad = -10 m-3+4m H_m +6 H_{2m}-(H_m^2-H_m^{(2)})-4 \nosum_m
\end{array}
$$
If $n$ is odd, say $n=2m+1$, then
$$
\begin{array}{l}
S_1(2m+1) \displaystyle= S_1(2m)+\frac{2m-1}{2m+1}(2H_{m}-3)\\
\quad\displaystyle= -10 m-3+4m H_m +6 H_{2m+1}-\frac{6}{2m+1}-(H_m^2-H_m^{(2)})-4 \nosum_m\\
\qquad\qquad \displaystyle
+\frac{4m-2}{2m+1}H_{m}-\frac{6m-3}{2m+1}
 \\
 \quad\displaystyle= -10 m-6+\Big(4m+2-\frac{4}{2m+1}\Big) H_m +6 H_{2m+1}-(H_m^2-H_m^{(2)})-4 \nosum_m
 \end{array}
$$
Both formulas agree with
$$
\begin{array}{rl}
S_1(n)& \displaystyle =-3n-3-4\Big\lfloor\frac{n}{2}\Big\rfloor+\Big(2n-4+\frac{8}{n}\Big\lfloor\frac{n}{2}\Big\rfloor\Big)H_{\lfloor\frac{n}{2}\rfloor}+6 H_{n}\\
& \qquad\displaystyle -(H_{\lfloor\frac{n}{2}\rfloor}^2-H_{\lfloor\frac{n}{2}\rfloor}^{(2)})-4 \nosum_{\lfloor\frac{n}{2}\rfloor}
\end{array}
$$

\noindent\textbf{\emph{Sum $S_2$}}.
We consider  four cases, depending of the class of $n$ mod $4$. If $n=4m-2$, then
$$
\begin{array}{l}
\displaystyle S_2(4m-2) \displaystyle =4 \Big(\sum_{l=1}^m \frac{l-1}{4l-3} H_{l-1}+\sum_{l=1}^{m-1} \frac{l}{4l-1} H_{l}+\sum_{l=1}^{m-1} \frac{2l-1}{2(4l-1)} H_{l}\\
\quad\qquad\quad\qquad\qquad\displaystyle
+\sum_{l=1}^{m-1} \frac{2l+1}{2(4l+1)} H_{l}\Big) \\
\quad \displaystyle =4 \Big(\sum_{l=1}^{m-1} \frac{l}{4l+1} H_{l}+\sum_{l=1}^{m-1} \frac{l}{4l-1} H_{l}+\sum_{l=1}^{m-1} \frac{2l-1}{2(4l-1)} H_{l}+\sum_{l=1}^{m-1} \frac{2l+1}{2(4l+1)} H_{l}\Big)\\
\quad \displaystyle =4 \sum_{l=1}^{m-1} \Big(\frac{l}{4l+1} + \frac{l}{4l-1} + \frac{2l-1}{2(4l-1)} + \frac{2l+1}{2(4l+1)}\Big) H_{l}\\
\quad \displaystyle = 4 \sum_{l=1}^{m-1}H_{l}=4m(H_m-1)
\end{array}
$$
Now, if $n=4m-1$, then
$$
S_2(4m-1)=
S_2(4m-2)+4\frac{2m(2m-1)}{(4m-1)(4m-2)} H_m=\frac{16m^2}{4m-1} H_m -4m
$$
If $n=4m$, then
$$
S_2(4m)=
S_2(4m-1)+4\frac{2m(2m-1)}{(4m-1)4m} H_m=(4m+2)H_m-4m
$$
And finally, if $n=4m+1$, then
$$
S_2(4m+1) =S_2(4m)+4\frac{2m(2m+1)}{4m(4m+1)} H_m= \frac{(4m+2)^2}{4m+1} H_m -4m
$$
These four formulas agree with
$$
S_2(n)=\Big(n+3-\frac{2}{n}\Big\lfloor\frac{n}{2}\Big\rfloor\Big)H_{\lfloor \frac{n+2}{4}\rfloor}-4\Big\lfloor \frac{n+2}{4}\Big\rfloor.
$$

\noindent\textbf{\emph{Sum $S_3$}}. We consider the same four cases as in the previous sum. If $n=4m-2$, then
$$
\begin{array}{l}
\displaystyle S_3(4m-2) = 8\left(\sum_{l=1}^m \frac{(l-1)(l-2)}{(4l-2)(4l-3)} + \sum_{l=1}^{m-1} \frac{l(l-1)}{(4l-1)(4l-2)} \right. \\ 
\displaystyle \qquad \qquad \left. + \sum_{l=1}^{m-1} \frac{l(l-1)}{4l (4l-1)}+  \sum_{l=1}^{m-1} \frac{l(l-1)}{4l (4l+1)}\right) \\
\displaystyle \qquad = 8\left(\sum_{l=1}^{m-1} \frac{l(l-1)}{(4l+2)(4l+1)} + \sum_{l=1}^{m-1} \frac{l(l-1)}{(4l-1)(4l-2)} + \sum_{l=1}^{m-1} \frac{l(l-1)}{4l (4l-1)}\right. \\ \displaystyle \qquad \qquad  \left. +  \sum_{l=1}^{m-1} \frac{l(l-1)}{4l (4l+1)}\right) \end{array}
$$
$$
\begin{array}{l}\displaystyle  \qquad = 8 \sum_{l=1}^{m-1} \frac{l(l-1)}{4l^2-1}=
 \sum_{l=1}^{m-1}\Big(2-\frac{1}{2l-1}-\frac{3}{2l+1}\Big)\\
\displaystyle \qquad = 2H_{m-1} - 4H_{2m-2}+\frac{4 m^2-4}{2m-1} = 2H_{m-1} - 4H_{2m-1}+\frac{4 m^2}{2m-1}
\end{array}
$$
If $n=4m-1$, then
$$
\begin{array}{l}
\displaystyle S_3(4m-1)= S_3(4m-2)+\frac{8m(m-1)}{(4m-1)(4m-2)}\\ 
 \displaystyle \qquad =2H_{m-1} - 4H_{2m-1}+\frac{4 m^2}{2m-1}+\frac{8m(m-1)}{(4m-1)(4m-2)}\\ 
  \displaystyle \qquad = 
2 H_{m-1} -4 H_{2m-1}+\frac{4m(2m+1)}{4m-1} 
\end{array}
$$
If $n=4m$, then
$$
\begin{array}{l}
\displaystyle S_3(4m) = S_3(4m-1)+\frac{8m(m-1)}{4m(4m-1)}\\ 
 \displaystyle \qquad = 2 H_{m-1} -4 H_{2m-1}+\frac{4m(2m+1)}{4m-1}  +\frac{2(m-1)}{4m-1}\\ 
  \displaystyle \qquad =
2 H_m -4 H_{2m}+2(m+1)
\end{array}
$$
And if $n=4m+1$, then
$$
\begin{array}{l}
\displaystyle S_3(4m+1) = S_3(4m)+\frac{8m(m-1)}{4m(4m+1)}\\   \displaystyle \qquad = 2 H_m -4 H_{2m}+2(m+1)  +\frac{2(m-1)}{4m+1}\\   \displaystyle \qquad =
2 H_m -4 H_{2m}+\frac{4m(2m+3)}{4m+1}
\end{array}
$$
These four formulas correspond to
$$
S_3(n)=2H_{\lfloor\frac{n}{4}\rfloor}-4H_{\lfloor\frac{n}{2}\rfloor}
+\frac{4}{n}\Big\lfloor\frac{n+2}{4}\Big\rfloor\Big(n+2-2\Big\lfloor\frac{n+2}{4}\Big\rfloor\Big)
$$

\noindent\textbf{\emph{Sum $S_4$}}. If $n =2m$, then
$$
\begin{array}{l}
\displaystyle S_4(2m) = 6 \left( \sum_{l=1}^m \frac{l-1}{l (2l-1)} + \sum_{l=1}^{m-1} \frac{1}{2l+1} \right)
\\   \displaystyle \qquad=6 \sum_{l=1}^{m-1}\left( \frac{l}{(l+1) (2l+1)} + \frac{1}{2l+1} \right)
 = 6 \sum_{l=1}^{m-1} \frac{1}{l+1}= 6H_m -6
\end{array}
$$
  If $n =2m+1$, then
$$
\begin{array}{l}
\displaystyle S_4(2m+1) = S_4(2m)+12\cdot \frac{m}{2m(2m+1)}=6 (H_m -1)+ \frac{6}{2m+1}\\   \displaystyle \qquad =
6H_m-\frac{12m}{2m+1}
\end{array}
$$
Both formulas agree with
$$
S_4(n)=6H_{\lfloor\frac{n}{2}\rfloor}-\frac{12}{n}\cdot \Big\lfloor\frac{n}{2}\Big\rfloor
$$

\noindent\textbf{\emph{Sum $S_5$}}.  If $n =2m$, then
$$
\begin{array}{l}
\displaystyle S_5(2m) = 2\sum_{l=1}^m \frac{j(2j-1)}{2j(2j-1)}(H_l^2 - H_l^{(2)})+2\sum_{l=1}^{m-1}\frac{j(2j+1)}{2j(2j+1)}(H_l^2 - H_l^{(2)})\\ 
 \displaystyle \qquad = \sum_{l=1}^m (H_l^2 - H_l^{(2)})+\sum_{l=1}^{m-1}  (H_l^2 - H_l^{(2)}) \\ 
  \displaystyle \qquad = 2\sum_{l=1}^{m-1}  (H_l^2 - H_l^{(2)}) + H_m^2 - H_m^{(2)} \\ 
    \displaystyle \qquad = 
(2m+1)(H_m^2 - H_m^{(2)}) -4m H_m +4m
\end{array}
$$
 If $n =2m+1$, then
$$
\begin{array}{l}
\displaystyle S_5(2m+1) = S_5(2m)+2\cdot \frac{m(2m+1)}{2m(2m+1)}(H_m^2-H_m^{(2)})\\  \displaystyle \qquad =  
(2m+2)(H_m^2 - H_m^{(2)}) -4m H_m +4m
\end{array}
$$
This shows that
$$
S_5(n)=(n+1)(H_{\lfloor\frac{n}{2}\rfloor}^2-H_{\lfloor\frac{n}{2}\rfloor}^{(2)})-4 \Big\lfloor\frac{n}{2}\Big\rfloor(H_{\lfloor\frac{n}{2}\rfloor}-1)
$$

\noindent\textbf{\emph{Sum $S_6$}}.  If $n =2m$, then
$$
\begin{array}{l}
\displaystyle S_6(2m) = 2 \left(\sum_{l=1}^m \frac{3l-2}{2l-1} H_l + \sum_{l=1}^{m-1} \frac{3l}{2l+1} H_l\right) \\ \displaystyle \qquad = 
\sum_{l=1}^{m-1} \left(6- \frac{1}{2l-1}-\frac{3}{2l+1} \right) H_l + \frac{2(3m-2)}{2m-1} H_m \\ \displaystyle \qquad = 
6m(H_m-1)-\Big(\nosum_m-\frac{4m-1}{2m-1}H_m+2H_{2m}\Big)-3\nosum_m+\frac{6m-4}{2m-1}H_m\\
\displaystyle \qquad = (6m+5) H_m -4 \nosum_m -2 H_{2m}-6m
\end{array}
$$
 If $n =2m+1$, then
$$
\begin{array}{l}
\displaystyle S_6(2m+1) = S_6(2m)+ \frac{6m}{2m+1}H_m\\
\displaystyle \qquad = (6m+5) H_m -4 \nosum_m -2 H_{2m}-6m+ \frac{6m}{2m+1}H_m\\
\displaystyle \qquad = 
\left(6m+5+\frac{6m}{2m+1}\right) H_m -4 \nosum_m -2 H_{2m+1}+\frac{2}{2m+1}-6m
\end{array}
$$
Both formulas correspond to
$$
S_6(n)=\Big(3n+2+\frac{6}{n}\Big\lfloor\frac{n}{2}\Big\rfloor\Big)H_{\lfloor\frac{n}{2}\rfloor}-4\nosum_{\lfloor\frac{n}{2}\rfloor}-2H_n-6\Big\lfloor\frac{n}{2}\Big\rfloor+ \frac{2}{n}\Big(n-2\Big\lfloor\frac{n}{2}\Big\rfloor\Big)
$$

\noindent\textbf{\emph{Sum $S_7$}}.
If $n=2m$,
$$
\begin{array}{l}
\displaystyle S_7(2m) = \sum_{l=1}^m \frac{11l-9}{2l-1}+\sum_{l=1}^{m-1} \frac{11l-1}{2l+1} \\ \displaystyle \qquad =\sum_{l=1}^{m-1} \left(11-\frac{1}{2}\left(\frac{7}{2l-1}+\frac{13}{2l+1} \right) \right)  
 + \frac{11m-9}{2m-1}  \\ \displaystyle \qquad  = 11 (m-1)-10 \sum_{l=1}^m \frac{1}{2l-1}+\frac{7}{2(2m-1)}+\frac{13}{2}+\frac{11m-9}{2m-1} \\ \displaystyle \qquad  = 11 m+1+ 5 H_m-10 H_{2m}  
\end{array}
$$
If $n=2m+1$,
$$
\begin{array}{l}
\displaystyle S_7(2m+1) = S_7(2m)+2\cdot \frac{m(11m-1)}{2m(2m+1)}\\
\displaystyle \qquad = 11 m+1+ 5 H_m-10 H_{2m} +\frac{11m-1}{2m+1}\\
\displaystyle \qquad =
5 H_m-10 H_{2m+1}+ 11m+1+\frac{11m+9}{2m+1}
\end{array}
$$
Both formulas agree with
$$
S_7(n)=5H_{\lfloor\frac{n}{2}\rfloor}-10H_n+5n+\Big\lfloor\frac{n}{2}\Big\rfloor+1+\frac{1}{2}\Big(1+\frac{7}{n}\Big)\Big(n-2\Big\lfloor\frac{n}{2}\Big\rfloor\Big)
$$

\noindent\textbf{\emph{Sum $S_8$}}. If $n=2m$, 
$$
S_8(2m) = 4 \sum_{l=1}^m \frac{l-1}{l(2l-1)} = 4  \sum_{l=1}^m \left( \frac{1}{l}-\frac{1}{2l-1}\right)  = 6 H_m -4  H_{2m},
$$
and if $n=2m+1$,
$$
S_8(2m+1) = S(2m)+\frac{(2m-1)\cdot 0}{2m(2m+1)} = 6 H_m -4  H_{2m} =6H_m-4H_{2m+1}+\frac{4}{2m+1}
$$
Therefore,
$$
S_8(n)=6H_{\lfloor\frac{n}{2}\rfloor}-4H_n+\frac{4}{n}\Big(n-2\Big\lfloor\frac{n}{2}\Big\rfloor\Big)
$$

Now, once we have computed $S_1,\ldots, S_8$, we can compute $z_n$:
$$
\begin{array}{rl}
z_n & = S_1(n)+S_2(n)-S_3(n)-S_4(n)+S_5(n)-S_6(n)+S_7(n)+S_8(n)\\
& \displaystyle =\frac{5n+7}{2}+\Big(6+\frac{1}{n}\Big)\Big\lfloor\frac{n}{2}\Big\rfloor
+\frac{8}{n}\Big\lfloor\frac{n+2}{4}\Big\rfloor^2-\frac{8(n+1)}{n}\Big\lfloor\frac{n+2}{4}\Big\rfloor
\\
&\quad\displaystyle +n(H_{\lfloor\frac{n}{2}\rfloor}^2-H_{\lfloor\frac{n}{2}\rfloor}^{(2)})-6H_n+\Big(3-n-\Big(4-\frac{2}{n}\Big)\Big\lfloor\frac{n}{2}\Big\rfloor\Big)H_{\lfloor\frac{n}{2}\rfloor}\\
&\quad\displaystyle +\Big(n+3-\frac{2}{n}\Big\lfloor\frac{n}{2}\Big\rfloor\Big)H_{\lfloor\frac{n+2}{4}\rfloor}-2H_{\lfloor\frac{n}{4}\rfloor}
\end{array}
$$

And finally
$$
 \displaystyle \begin{array}{l}
E_Y(C_n^2)=nz_n\\
\quad\displaystyle  =\frac{5n^2+7n}{2}+(6n+1)\Big\lfloor\frac{n}{2}\Big\rfloor
+8\Big\lfloor\frac{n+2}{4}\Big\rfloor^2-8(n+1)\Big\lfloor\frac{n+2}{4}\Big\rfloor
\\
\qquad\displaystyle +n^2(H_{\lfloor\frac{n}{2}\rfloor}^2-H_{\lfloor\frac{n}{2}\rfloor}^{(2)})-6nH_n+\Big(3n-n^2-(4n-2)\Big\lfloor\frac{n}{2}\Big\rfloor\Big)H_{\lfloor\frac{n}{2}\rfloor}\\
\qquad\displaystyle +\Big(n^2+3n-2\Big\lfloor\frac{n}{2}\Big\rfloor\Big)H_{\lfloor\frac{n+2}{4}\rfloor}-2nH_{\lfloor\frac{n}{4}\rfloor}
\end{array}
$$
as we claimed. 
\hspace*{\fill}\qed
\end{proof}

\begin{corollary}\label{th:varC}
The variance of $C_n$ under the Yule model is
$$
  \begin{array}{l}
\sigma_Y^2(C_n^2)  \displaystyle  =\frac{5n^2+7n}{2}+(6n+1)\Big\lfloor\frac{n}{2}\Big\rfloor
-4\Big\lfloor\frac{n}{2}\Big\rfloor^2+8\Big\lfloor\frac{n+2}{4}\Big\rfloor^2\\
\qquad\displaystyle -8(n+1)\Big\lfloor\frac{n+2}{4}\Big\rfloor
-6nH_n+\Big(2\Big \lfloor\frac{n}{2}\Big\rfloor - n(n-3)\Big)H_{\lfloor\frac{n}{2}\rfloor}\\
\qquad\displaystyle -n^2H_{\lfloor\frac{n}{2}\rfloor}^{(2)}+\Big(n^2+3n-2\Big\lfloor\frac{n}{2}\Big\rfloor\Big)H_{\lfloor\frac{n+2}{4}\rfloor}-2nH_{\lfloor\frac{n}{4}\rfloor}
\end{array}
$$
\end{corollary}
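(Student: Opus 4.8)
The plan is to read off the variance directly from the identity $\sigma^2_Y(C_n)=E_Y(C_n^2)-E_Y(C_n)^2$, substituting the closed form for $E_Y(C_n^2)$ established in Theorem~\ref{th:S} together with the mean $E_Y(C_n)=nH_{\lfloor n/2\rfloor}-2\lfloor n/2\rfloor$ recalled at the beginning of this section. No new combinatorics is required: all of the work (the recursion from Corollary~\ref{cor:YI}, the parity case splits, the eight auxiliary sums $S_1,\dots,S_8$) has already been carried out in proving Theorem~\ref{th:S}, so the entire content of the corollary reduces to one algebraic simplification.

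Concretely, I would first expand the square of the mean,
$$
E_Y(C_n)^2=\Big(nH_{\lfloor n/2\rfloor}-2\Big\lfloor\frac{n}{2}\Big\rfloor\Big)^2=n^2H_{\lfloor n/2\rfloor}^2-4n\Big\lfloor\frac{n}{2}\Big\rfloor H_{\lfloor n/2\rfloor}+4\Big\lfloor\frac{n}{2}\Big\rfloor^2,
$$
and then subtract it termwise from $E_Y(C_n^2)$. The decisive step is the cancellation of the quadratic harmonic term: $E_Y(C_n^2)$ carries the block $n^2(H_{\lfloor n/2\rfloor}^2-H_{\lfloor n/2\rfloor}^{(2)})$, and removing $n^2H_{\lfloor n/2\rfloor}^2$ kills the $H_{\lfloor n/2\rfloor}^2$ summand, leaving precisely the $-n^2H_{\lfloor n/2\rfloor}^{(2)}$ of the target formula. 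The cross term contributes $+4n\lfloor n/2\rfloor\,H_{\lfloor n/2\rfloor}$ upon subtraction, which merges with the coefficient $3n-n^2-(4n-2)\lfloor n/2\rfloor$ of $H_{\lfloor n/2\rfloor}$ in $E_Y(C_n^2)$ to give $(3n-n^2+2\lfloor n/2\rfloor)H_{\lfloor n/2\rfloor}=(2\lfloor n/2\rfloor-n(n-3))H_{\lfloor n/2\rfloor}$, matching the stated coefficient. The constant square term subtracts off a further $4\lfloor n/2\rfloor^2$ from the polynomial part, while the remaining harmonic contributions $-6nH_n$, $(n^2+3n-2\lfloor n/2\rfloor)H_{\lfloor(n+2)/4\rfloor}$ and $-2nH_{\lfloor n/4\rfloor}$ carry over untouched.

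There is essentially no obstacle here beyond careful bookkeeping. The floor expressions already agree between the two formulas, so no parity case analysis is reintroduced at this stage — every dependence on the parity of $n$ was absorbed into the floor notation when Theorem~\ref{th:S} was assembled, and the subtraction respects that notation uniformly. The only point requiring attention is to keep the purely polynomial and floor-valued part of $E_Y(C_n^2)$ intact (in particular the $\tfrac{5n^2+7n}{2}$ term and the $\lfloor(n+2)/4\rfloor$ contributions), since those pieces are unaffected by the subtraction apart from the single adjustment by $-4\lfloor n/2\rfloor^2$.
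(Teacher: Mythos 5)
Your proposal is correct and matches the paper's own proof, which likewise obtains the corollary by substituting the formula for $E_Y(C_n^2)$ from the preceding theorem and $E_Y(C_n)=nH_{\lfloor n/2\rfloor}-2\lfloor n/2\rfloor$ into $\sigma^2_Y(C_n)=E_Y(C_n^2)-E_Y(C_n)^2$. Your explicit tracking of the cancellation of $n^2H_{\lfloor n/2\rfloor}^2$, the merging of the cross term $4n\lfloor n/2\rfloor H_{\lfloor n/2\rfloor}$ into the coefficient $2\lfloor n/2\rfloor-n(n-3)$, and the extra $-4\lfloor n/2\rfloor^2$ is exactly the bookkeeping the paper leaves implicit.
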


\begin{proof}
Simply replace in the formula $\sigma^2_Y(C_n)=E_Y(C_n^2)-E_Y(C_n)^2$ the value of $E_Y(C_n^2)$ obtained in the last theorem and the value of 
$$
E_Y(C_n)=n(H_{\lfloor \frac{n}{2}\rfloor}-1)+(n \mmod 2)=nH_{\lfloor \frac{n}{2}\rfloor}-2\Big\lfloor \frac{n}{2}\Big\rfloor
$$ 
recalled above. \hspace*{\fill}\qed
\end{proof}

\begin{corollary}
$$
\begin{array}{l}
\displaystyle \sigma^2_Y(C_n)=
-\frac{8}{3} \Big(-18+\pi ^2+\log (64)\Big)\Big \lfloor \frac{n}{4}\Big\rfloor^2 -8 \Big\lfloor\frac{n}{4}\Big\rfloor\log\Big(\Big\lfloor\frac{n}{4}\Big\rfloor\Big) \\ \qquad\displaystyle+
\Big(20-8\gamma-32\log(2) +\Big(24-\frac{4}{3}\pi^2-8\log(2)\Big) (n \mmod 4)\Big)\Big \lfloor\frac{n}{4}\Big\rfloor+ O(1).
\end{array}
   $$
   \end{corollary}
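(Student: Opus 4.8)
The plan is to feed the exact formula for $\sigma^2_Y(C_n)$ from Corollary \ref{th:varC} into the asymptotic expansions of harmonic numbers recalled in Section \ref{subsec:hn},
$$
H_m=\ln m+\gamma+\frac{1}{2m}+O\Big(\frac{1}{m^2}\Big),\qquad H_m^{(2)}=\frac{\pi^2}{6}-\frac{1}{m}+O\Big(\frac{1}{m^2}\Big),
$$
and to keep track of every contribution down to order $O(1)$. To treat the floors uniformly I would write $n=4q+r$ with $q=\lfloor n/4\rfloor$ and $r=n\bmod 4\in\{0,1,2,3\}$, and record once and for all that $\lfloor n/2\rfloor=2q+\lfloor r/2\rfloor$, $\lfloor (n+2)/4\rfloor=q+\lfloor(r+2)/4\rfloor$ and $\lfloor n/4\rfloor=q$. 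In this way every summand of Corollary \ref{th:varC} becomes a polynomial in $q$, with $r$-dependent coefficients, multiplied (when a harmonic number is present) by an $H_{\alpha q+\beta}$ or $H^{(2)}_{\alpha q+\beta}$ with explicit $\alpha\in\{1,2,4\}$ and bounded $\beta$. Since the harmonic prefactors are as large as $n^2\sim 16q^2$, I must expand $H_m$ to order $1/m$ and use $\ln(\alpha q+\beta)=\ln q+\ln\alpha+\tfrac{\beta}{\alpha q}+O(1/q^2)$, because both the $\tfrac{\beta}{\alpha q}$ shift and the $\tfrac{1}{2m}$ term, once multiplied by $q^2$, still contribute at order $q$.

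The computation is made tractable by a cascade of cancellations. The two dominant terms $\big(n^2+3n-2\lfloor n/2\rfloor\big)H_{\lfloor(n+2)/4\rfloor}$ and $\big(2\lfloor n/2\rfloor-n(n-3)\big)H_{\lfloor n/2\rfloor}$ have leading coefficients $+n^2$ and $-n^2$, so their $q^2\log q$ parts ($+16$ and $-16$) cancel, as do their $\gamma q^2$ parts; what survives at order $q^2$ is an $n^2\log 2$ from the second term together with the $-n^2\pi^2/6$ coming from $-n^2H^{(2)}_{\lfloor n/2\rfloor}$. Adding these to the $q^2$ part of the purely polynomial summands (whose leading coefficient is $48q^2$) reproduces the claimed coefficient $48-16\log 2-\tfrac{8\pi^2}{3}=-\tfrac{8}{3}\big(-18+\pi^2+\log 64\big)$ of $\lfloor n/4\rfloor^2$; as an independent check, dividing by $16\approx n^2/q^2$ recovers the limit law $\big(3-\tfrac{\pi^2}{6}-\log 2\big)n^2$ of \cite{BFJ:06}.

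Next I would collect the $q\log q$ terms, to which only the linear-in-$q$ parts of the prefactors contribute. From the four log-bearing terms these coefficients are $8(r+1)$, $16-8r$, $-24$ (from $-6nH_n$) and $-8$ (from $-2nH_{\lfloor n/4\rfloor}$); their sum is $-8$ regardless of $r$, which matches the term $-8\lfloor n/4\rfloor\log\lfloor n/4\rfloor$ and in particular shows there is no $(n\bmod 4)$ dependence at this order.

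The remaining work, and the main obstacle, is the coefficient of the linear term $\lfloor n/4\rfloor$, since this is the only place where the residue $r$ genuinely survives. Here one must assemble simultaneously the constants $\gamma$ and $\ln\alpha$ times the quadratic prefactors, the $\tfrac{\beta}{\alpha q}$ corrections from the logarithms multiplied by $n^2$, the $\tfrac{1}{2m}$ contribution of $H_m$ multiplied by $n^2$, and the exact linear-in-$q$ part of the polynomial summands. Because all these ingredients are $r$-dependent through the floors, the cleanest way to finish is to specialize the substitution to each of the four classes $r\in\{0,1,2,3\}$ separately, read off the four resulting linear coefficients, and verify that they are uniformly given by $20-8\gamma-32\log 2+\big(24-\tfrac{4}{3}\pi^2-8\log 2\big)(n\bmod 4)$, everything of smaller order being absorbed into the $O(1)$ remainder. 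I expect this four-case reconciliation to be where essentially all of the effort lies, the $q^2$ and $q\log q$ coefficients having already dropped out cleanly and residue-independently.
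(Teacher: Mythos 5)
Your proposal is correct and takes essentially the same route as the paper's own proof: the paper likewise expands the exact formula of Corollary~\ref{th:varC} through the asymptotic expansions of $H_m$ and $H_m^{(2)}$, splitting into the four residue classes $n=4m,4m+1,4m+2,4m+3$ and then reconciling the four expansions via $m=\lfloor n/4\rfloor$ and $n\bmod 4=n-4\lfloor n/4\rfloor$, which is exactly your four-case verification of the linear coefficient (and your uniform pre-analysis of the $q^2\log q$, $q^2$ and $q\log q$ coefficients, all of which check out, merely front-loads cancellations the paper absorbs into the case computations). Your arithmetic agrees with the paper's case expansions, so the proposal stands as a valid blueprint for the same proof.
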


\begin{proof}
We expand the expression for $\sigma^2_Y(C_n)$ given in the previous corollary, 
taking into account the value of~$n$ module~$4$.

If there exists some $m$ such that $n=4m$, then 
$$
\begin{array}{l}
\sigma^2_Y(C_n)=8 m \left(2 m H_m-2 (m-1) H_{2 m}-3 H_{4 m}-2 m H_{2 m}^{(2)}+6 m+1\right)\\
\quad\displaystyle =-\frac{8}{3} m^2 \left(-18+\pi ^2+\log (64)\right)+m \left(-8 \log (m)-8 \gamma +20-32 \log (2)\right) \\ \qquad\displaystyle -2+O\left( \frac{1}{m}\right)
\end{array}
$$
  
If there exists some $m$ such that $n=4m+1$, then 
$$
\begin{array}{l}
\sigma^2_Y(C_n)=  2 \left(8 m^2+4 m+1\right) H_m+\left(-16 m^2+8 m+2\right) H_{2 m}-24 m H_{4 m+1}\\ \qquad -6 H_{4 m+1}-16 m^2 H_{2 m}^{(2)}-8 m H_{2 m}^{(2)}-H_{2 m}^{(2)}+48 m^2+32 m+6\\
\quad\displaystyle=   -\frac{8}{3} m^2 \left(-18+\pi ^2+\log (64)\right)-\frac{4}{3} m (6  \log (m)+\pi ^2+6 \gamma -33+30 \log (2)) \\ \qquad\displaystyle  +\left(-2  \log (m)-\frac{\pi ^2}{6}-2 \gamma +4-10 \log (2)\right)+O\left(\frac{1}{m}\right)
\end{array}
$$
  
If there exists some $m$ such that $n=4m+2$, then 
$$
\begin{array}{l}
\sigma^2_Y(C_n)= 16 m H_{2 m+2}-24 m H_{4 m+2}+4 (2 m+1)^2 H_{m+1}-4 (2 m+1)^2 H_{2 m+1} \\ \qquad\displaystyle +8 H_{2 m+2} -12 H_{4 m+2}-16 m^2 H_{2 m+1}^{(2)}-16 m H_{2 m+1}^{(2)}-4 H_{2
   m+1}^{(2)}\\ \qquad+48 m^2+48 m+10\\
   \quad\displaystyle=   -\frac{8}{3} m^2 \left(-18+\pi ^2+\log (64)\right)+m \left(-8 \log (m)-\frac{8 \pi ^2}{3}-8 \gamma +68-48 \log (2)\right)\\ \qquad\displaystyle -\frac{2}{3} (6
   \log (m)+\pi ^2+6 \gamma -24+30 \log (2)) +O\left(\frac{1}{m}\right)
  \end{array}
$$
  
Finally, if there exists some $m$ such that $n=4m+3$, then 
$$
\begin{array}{l}
\displaystyle\sigma^2_Y(C_n)= (4 m+3) \big[4 (m+1) H_{m+1}-4 (m+1) H_{2 m+1}+4 H_{2 m+2}-6 H_{4 m+3} \\  \qquad\displaystyle - (4 m+3) H_{2 m+1}^{(2)}  +10 m+11\big] +(2 m+1) \left(-2 H_{m+1}+2 H_{2
   m+1}+24 m+19\right) \\  \qquad\displaystyle-24 (m+1)^2-4 (2 m+1)^2\\
    \quad\displaystyle=    -\frac{8}{3} m^2 \left(-18+\pi ^2+\log (64)\right)-4 m \left(2  \log (m)+\pi ^2+2 \gamma -23+7 \log (4)\right) \\  \qquad\displaystyle +\left(-6 \log (m)-\frac{3 \pi ^2}{2}-6 \gamma +34-17 \log (4)\right)+O\left(\frac{1}{m}\right)
  \end{array}
$$

Now, using that $m=\lfloor n/4\rfloor$ and 
$n\mod 4=n-4\lfloor n/4\rfloor$, it can be easily seen that
 these formulas are consistent with the 
development of $\sigma^2_Y(C_n)$ until $O(1)$ given in the statement. \hspace*{\fill}\qed
\end{proof}

\section{Some covariances}

In this section we obtain the covariance under the Yule model of $S_n$ and $\Phi_n$ from the formulas obtained in the previous sections for  $E_Y(\oF^2_n)$, $E_Y(S_n^2)$ and $E_Y(\Phi_n^2)$.

\begin{corollary}
$ \Cov_Y(S_n, \Phi_n)= 4n(nH_n^{(2)}+H_n)+\frac{1}{6} n (n^2-51 n+2)$.
\end{corollary}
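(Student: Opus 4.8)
The plan is to avoid any new recursion and instead exploit the bilinearity of covariance together with the identity $\oF=S+\Phi$. Since $\oF_n=S_n+\Phi_n$ as random variables on $\TT_n$, the standard polarization formula for variances gives
$$
\sigma^2_Y(\oF_n)=\sigma^2_Y(S_n)+\sigma^2_Y(\Phi_n)+2\Cov_Y(S_n,\Phi_n),
$$
and hence
$$
\Cov_Y(S_n,\Phi_n)=\frac{1}{2}\Big(\sigma^2_Y(\oF_n)-\sigma^2_Y(S_n)-\sigma^2_Y(\Phi_n)\Big).
$$
Every term on the right-hand side has already been computed: $\sigma^2_Y(\oF_n)=2\binom{n}{4}$ by Corollary \ref{th:varoF}, $\sigma^2_Y(S_n)=7n^2-4n^2H_n^{(2)}-2nH_n-n$ by Corollary \ref{th:varS}, and $\sigma^2_Y(\Phi_n)=\frac{1}{12}(n^4-10n^3+131n^2-2n)-4n^2H_n^{(2)}-6nH_n$ by Corollary \ref{th:varphi}. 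Thus no further probabilistic work is needed, and the proof reduces to a single substitution.

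Next I would carry out the substitution in two independent blocks, since the three formulas split cleanly into a harmonic part and a polynomial part. The harmonic contributions come only from $\sigma^2_Y(S_n)$ and $\sigma^2_Y(\Phi_n)$: subtracting them and multiplying by $\tfrac12$ yields exactly $4n^2H_n^{(2)}+4nH_n=4n(nH_n^{(2)}+H_n)$, which already reproduces the harmonic part of the claimed answer. For the polynomial part I would first rewrite $2\binom{n}{4}=\frac{1}{12}(n^4-6n^3+11n^2-6n)$, then combine it with the polynomial terms $-(7n^2-n)$ coming from $-\sigma^2_Y(S_n)$ and $-\frac{1}{12}(n^4-10n^3+131n^2-2n)$ coming from $-\sigma^2_Y(\Phi_n)$; the quartic terms cancel and, after dividing by $2$, the remaining expression collapses to $\frac{1}{6}(n^3-51n^2+2n)=\frac{1}{6}n(n^2-51n+2)$.

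Since the entire argument is a substitution into an algebraic identity, there is essentially no conceptual obstacle; the only thing to watch is the bookkeeping of signs and of the $\tfrac{1}{12}$ denominators when the two quartic polynomials are subtracted. In particular one must check that the $n^4$ terms of $\sigma^2_Y(\oF_n)$ and $\sigma^2_Y(\Phi_n)$ indeed cancel — they do, both having coefficient $\tfrac{1}{12}$ — which is the structural reason the covariance is a cubic rather than a quartic in $n$. Assembling the harmonic and polynomial blocks then gives $\Cov_Y(S_n,\Phi_n)=4n(nH_n^{(2)}+H_n)+\frac{1}{6}n(n^2-51n+2)$, as claimed.
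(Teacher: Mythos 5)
Your proposal is correct and takes essentially the same route as the paper: both exploit the identity $\oF_n=S_n+\Phi_n$ and polarization, the paper writing $\Cov_Y(S_n,\Phi_n)=\frac{1}{2}\big(E_Y(\oF_n^2)-E_Y(S_n^2)-E_Y(\Phi_n^2)\big)-E_Y(S_n)E_Y(\Phi_n)$ in terms of second moments, while you use the equivalent variance form $\frac{1}{2}\big(\sigma_Y^2(\oF_n)-\sigma_Y^2(S_n)-\sigma_Y^2(\Phi_n)\big)$, which absorbs the mean-product correction automatically since $E_Y(\oF_n)=E_Y(S_n)+E_Y(\Phi_n)$. Your bookkeeping is also right: the harmonic terms combine to $4n(nH_n^{(2)}+H_n)$ and the quartic terms of $2\binom{n}{4}=\frac{1}{12}(n^4-6n^3+11n^2-6n)$ and $\sigma_Y^2(\Phi_n)$ cancel, leaving $\frac{1}{6}n(n^2-51n+2)$ as claimed.
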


\begin{proof}
Notice that 
$$
\begin{array}{l}
\Cov_Y(S_n, \Phi_n)  =E_Y(S_n\cdot \Phi_n)-E_Y(S_n)\cdot E_Y(\Phi_n)\\
\qquad \displaystyle =\frac{1}{2}\big(E_Y((\Phi_n+S_n)^2)-E_Y(S_n^2)-E_Y(\Phi_n^2)\big)-E_Y(S_n)\cdot E_Y(\Phi_n)\\[2ex]
\qquad \displaystyle =\frac{1}{2}\big(E_Y(\oF^2_n)-E_Y(S_n^2)-E_Y(\Phi_n^2)\big)-E_Y(S_n)\cdot E_Y(\Phi_n)
\end{array}
$$
The formula in the statement is obtained by replacing in this identity 
$E_Y(\oF^2_n)$, $E_Y(S_n^2)$, $E_Y(\Phi_n^2)$, $E_Y(S_n)$, and $E_Y(\Phi_n)$ by their values. \hspace*{\fill}\qed
\end{proof}

\begin{corollary}
$\displaystyle \Cov_Y(S_n, \oF_n)= 2nH_n+\frac{1}{6} n (n^2-9 n-4)$
\end{corollary}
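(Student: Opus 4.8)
The plan is to exploit the decomposition $\oF_n = S_n + \Phi_n$ together with the bilinearity of the covariance, which reduces the computation to quantities already established and requires no new recursion. Concretely, since $\Cov_Y$ is bilinear,
$$
\Cov_Y(S_n, \oF_n) = \Cov_Y(S_n, S_n) + \Cov_Y(S_n, \Phi_n) = \sigma^2_Y(S_n) + \Cov_Y(S_n, \Phi_n).
$$
This is the natural counterpart, for the index $\oF$, of the polarization identity used in the preceding corollary for $\Phi$. One could equally well mimic that proof directly by expanding $S_n\oF_n = \tfrac{1}{2}(S_n^2 + \oF_n^2 - \Phi_n^2)$ (using $\Phi_n = \oF_n - S_n$) and then invoking the moment formulas $E_Y(S_n^2)$, $E_Y(\oF_n^2)$, $E_Y(\Phi_n^2)$ from Theorems \ref{th:S2}, \ref{th:oF} and \ref{th:phi2}; but routing through $\sigma^2_Y(S_n)$ and $\Cov_Y(S_n,\Phi_n)$ avoids recomputing anything.

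Next I would simply substitute the two known closed forms, namely $\sigma^2_Y(S_n) = 7n^2 - 4n^2 H_n^{(2)} - 2nH_n - n$ from Corollary \ref{th:varS} and $\Cov_Y(S_n,\Phi_n) = 4n(nH_n^{(2)} + H_n) + \tfrac{1}{6}n(n^2 - 51n + 2)$ from the previous corollary, and add them. The only care needed is bookkeeping of the three families of terms: the polynomial part, the $H_n$ part, and the generalized harmonic part $H_n^{(2)}$.

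The single point worth checking — hardly an obstacle — is that the $H_n^{(2)}$ contributions must vanish, since the target formula carries no such term. Indeed the $-4n^2 H_n^{(2)}$ from $\sigma^2_Y(S_n)$ exactly cancels the $+4n^2 H_n^{(2)}$ from $\Cov_Y(S_n,\Phi_n)$; the $H_n$ terms combine as $-2nH_n + 4nH_n = 2nH_n$; and the polynomial parts collect to $\tfrac{1}{6}n(n^2 - 9n - 4)$. Assembling these three pieces yields the claimed identity. Thus the whole argument is a one-line reduction followed by an elementary addition, with no genuinely difficult step.
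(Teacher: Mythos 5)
Your proposal is correct and follows essentially the same route as the paper, which also invokes the bilinearity of covariance to write $\Cov_Y(S_n,\oF_n)=\Cov_Y(S_n,S_n+\Phi_n)=\sigma_Y^2(S_n)+\Cov_Y(S_n,\Phi_n)$ and then substitutes the known closed forms. Your explicit verification that the $H_n^{(2)}$ terms cancel and that the polynomial and $H_n$ parts collect to $2nH_n+\frac{1}{6}n(n^2-9n-4)$ is exactly the (implicit) arithmetic behind the paper's one-line proof.
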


\begin{proof}
By the bilinearity of covariances,
$\Cov_Y(S_n, \oF_n)=\Cov_Y(S_n, S_n+\Phi_n)=\sigma_Y^2(S_n)+\Cov_Y(S_n, \Phi_n)$. \hspace*{\fill}\qed
\end{proof}

\begin{corollary}
$$
\begin{array}{rl}
\displaystyle \Cov_Y(S_n, \Phi_n) & = \displaystyle \frac{1}{6}n^3+\Big(\frac{2\pi^2}{3}-\frac{17}{2}\Big)n^2+4n\ln(n)+\frac{1}{3}(12\gamma-11)n +4+O\Big(\frac{1}{n}\Big)\\[2ex]
\Cov_Y(S_n, \oF_n) & =
 \displaystyle \frac{1}{6}n^3-\frac{3}{2}n^2+2n\ln(n)+\frac{1}{3}(6\gamma-2)n +1+O\Big(\frac{1}{n}\Big)
 \end{array}
 $$
\end{corollary}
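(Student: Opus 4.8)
The plan is to derive both expansions directly from the exact closed-form expressions for the two covariances established in the two preceding corollaries. Rewriting those formulas by expanding the binomial-free products, they read
$$
\Cov_Y(S_n,\Phi_n)=4n^2H_n^{(2)}+4nH_n+\tfrac{1}{6}n^3-\tfrac{17}{2}n^2+\tfrac{1}{3}n
$$
and
$$
\Cov_Y(S_n,\oF_n)=2nH_n+\tfrac{1}{6}n^3-\tfrac{3}{2}n^2-\tfrac{2}{3}n.
$$
Into each of these I would substitute the asymptotic developments of the harmonic numbers recorded in Section~\ref{subsec:hn},
$$
H_n=\ln(n)+\gamma+\frac{1}{2n}-\frac{1}{12n^2}+O\Big(\frac{1}{n^3}\Big),\qquad
H_n^{(2)}=\frac{\pi^2}{6}-\frac{1}{n}+\frac{1}{2n^2}+O\Big(\frac{1}{n^3}\Big),
$$
then multiply out and collect the resulting terms by descending order in $n$, treating $n\ln(n)$ as its own order sitting between $n^2$ and $n$.

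For the first covariance, the term $4n^2H_n^{(2)}$ expands to $\tfrac{2\pi^2}{3}n^2-4n+2+O(1/n)$ and $4nH_n$ expands to $4n\ln(n)+4\gamma n+2+O(1/n)$; adding these to the polynomial part and grouping yields cubic coefficient $\tfrac{1}{6}$, quadratic coefficient $\tfrac{2\pi^2}{3}-\tfrac{17}{2}$, $n\ln(n)$-coefficient $4$, linear coefficient $-4+4\gamma+\tfrac{1}{3}=\tfrac{1}{3}(12\gamma-11)$, and constant $2+2=4$, which is exactly the claimed formula. The second covariance is handled identically: $2nH_n$ contributes $2n\ln(n)+2\gamma n+1+O(1/n)$, and combining with the polynomial terms gives cubic coefficient $\tfrac{1}{6}$, quadratic $-\tfrac{3}{2}$, $n\ln(n)$-coefficient $2$, linear $2\gamma-\tfrac{2}{3}=\tfrac{1}{3}(6\gamma-2)$, and constant $1$.

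The computation has no genuine obstacle; the one point demanding care is precision bookkeeping. Because the harmonic numbers enter multiplied by $n^2$ (through $H_n^{(2)}$) and by $n$ (through $H_n$), I must keep the $1/n^2$-terms of both expansions in order to recover the constant terms correctly, while every contribution of order $1/n^2$ or lower after multiplication by its prefactor is absorbed into the remainder. The crucial consistency check is that the discarded $O(1/n^3)$ tails, once multiplied by the $n^2$ prefactor appearing in the first covariance, produce precisely an $O(1/n)$ contribution, matching the error term announced in the statement and confirming that no higher-order terms of the harmonic expansions are required.
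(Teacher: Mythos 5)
Your proposal is correct and is exactly the argument the paper intends (the corollary is stated without proof, being an immediate substitution of the harmonic-number asymptotics from Section~2.2 into the exact covariance formulas): your rewritten closed forms, the expansions $4n^2H_n^{(2)}=\frac{2\pi^2}{3}n^2-4n+2+O(1/n)$ and $4nH_n=4n\ln(n)+4\gamma n+2+O(1/n)$ (and $2nH_n=2n\ln(n)+2\gamma n+1+O(1/n)$), and all collected coefficients check out, including $-4+4\gamma+\frac13=\frac13(12\gamma-11)$ and $2\gamma-\frac23=\frac13(6\gamma-2)$. Your bookkeeping remark about keeping the $1/n^2$ terms of the expansions (so the constants $4$ and $1$ come out right) is precisely the one point of care the computation requires.
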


From the formulas for $\sigma_Y^2(\Phi_n)$, $\sigma_Y^2(\oF_n)$, and $\Cov_Y(S_n, \Phi_n)$, we can compute Pearson's correlation coefficient between $S_n$ and $\Phi_n$,
$$
cor_Y(S_n,\Phi_n)=\frac{\Cov_Y(S_n, \Phi_n)}{\sqrt{\sigma_Y^2(\Phi_n)\cdot \sigma_Y^2(\oF_n)}}.
$$
The exact formula for this coefficient is
$$
cor_Y(S_n,\Phi_n)\textstyle =\frac{4n(nH_n^{(2)}+H_n)+\frac{1}{6} n (n^2-51 n+2)}{\sqrt{\big(7n^2-4n^2H_n^{(2)}-2nH_n-n\big)\big(\frac{1}{12}(n^4-10n^3+131 n^2-2n)-4n^2 H_n^{(2)}-6nH_n\big)}}
$$
and in the limit it is equal to
$$
cor_Y(S_n,\Phi_n)\sim \frac{1}{6\sqrt{\big((7-\frac{2\pi^2}{3})\cdot \frac{1}{12} \big)}}=0.89059
$$

\section{Conclusions}

In this paper we have obtained exact formulas for the variance under the Yule model of the Colless index $C$, the Sackin index $S$, the total cophenetic index $\Phi$, and the sum $\oF=S+\Phi$, as well as for the covariances of $S$ and $\Phi,\oF$. Our formulas are explicit  and hold on spaces $\TT_n$ of binary phylogenetic trees with any number $n$ of leaves, unlike other expressions  published so far in the literature, which were either recursive or  asymptotic.

The proofs consist of elementary, although long and involved,  algebraic computations. Since it is not difficult to slip some mistake in such long algebraic computations, to double-check the results we have directly computed these variances and covariances on $\TT_n$ for $n=3,\ldots,9$ and confirmed that our formulas give the right results. The values obtained are given in the next table. The Python scripts used to compute them are available at the Supplementary Material web page  \url{http:/bioinfo.uib.es/~recerca/phylotrees/Yulevariances/}.

\begin{table}[htb]
\begin{tabular}{r|ccccc}
& 3& 4 & 5 & 6 & 7 \\
\hline
$\sigma_Y^2(C_n)$ & 0 & 2 & 3.5 & 6.8 & 10.072222
\\
$\sigma_Y^2(S_n)$ & 0 & 0.222222 & 0.805556 & 1.84 & 3.877778 
\\
$\sigma_Y^2(\Phi_n)$ & 0 & 0.888889 & 5.138889 & 17.04 & 42.787778  
\\
$\sigma_Y^2(\oF_n)$ & 0 & 2 & 10 & 30 & 70 
\\
$\Cov_Y(S_n, \Phi_n)$ & 0 &  0.444444 &  2.0277778  & 5.56 & 11.912222 
\\
$cor_Y(S_n,\Phi_n)$ & - & 1 & 0.996639 & 0.992958 & 0.989408 \\
\hline\hline
& 8 & 9 \\
\hline
$\sigma_Y^2(C_n)$ & 15.765079 & 21.089881
\\
$\sigma_Y^2(S_n)$ & 5.49424 & 8.193827
\\
$\sigma_Y^2(\Phi_n)$ & 90.522812 &  170.350969 
\\
$\sigma_Y^2(\oF_n)$ & 140 & 252
\\
$\Cov_Y(S_n, \Phi_n)$ & 21.991474 & 36.727602
\\
$cor_Y(S_n,\Phi_n)$ & 0.986101 & 0.983053\\

\end{tabular}
\caption{Values of $\sigma_Y^2(C_n)$, $\sigma_Y^2(S_n)$, $\sigma_Y^2(\Phi_n)$, $\sigma_Y^2(\oF_n)$, $\Cov_Y(S_n, \Phi_n)$, and $cor_Y(S_n,\Phi_n)$ for $n=3,\ldots,9$. They agree with those given by our formulas.}
\end{table}

It can be seen in this table that the values of the variances of $S_n$ are smaller than those of the variance of $\Phi$ or $\oF$. Actually, as we have recalled in the Introduction, $\sigma_Y^2(S_n)$ has order $O(n^2)$, while  
$\sigma_Y^2(\Phi_n)$ and $\sigma_Y^2(\oF_n)$ are $O(n^4)$. This is consistent with the fact that 
$\Phi$ and $\oF$ have larger spans of values than $S$, $O(n^3)$ instead of $O(n^2)$, and much less ties.
It is also deduced from the formulas obtained in this paper, and from this table for small values of $n$, that  there is a strong direct linear correlation between $S_n$ and $\Phi_n$, although in the limit Pearson's coefficient between them decreases to 0.89. 

It remains to compute  exact formulas for covariances of $C$ with $S$ and $\Phi$. These formulas
 can surely be  obtained using a recurrence for the expected value of the product of two recursive shape indices similar in spirit to  Corollary \ref{cor:YI}, but  the computations  seem to be even longer than those leading to  the computation of $\sigma_Y^2(C_n)$. 

\section*{Acknowledgements}  This  research has been partially supported by the Spanish government and the UE FEDER program, through projects MTM2009-07165 and TIN2008-04487-E/TIN.  We thank J. Mir\'o and M. Lewis for several comments on a previous version of this work. Most computations in this paper have been carried out or checked with the aid of \textsl{Mathematica}.

\end{document}